\documentclass[twoside,11pt]{article}

%\usepackage{blindtext}

% Any additional packages needed should be included after jmlr2e.
% Note that jmlr2e.sty includes epsfig, amssymb, natbib and graphicx,
% and defines many common macros, such as 'proof' and 'example'.
%
% It also sets the bibliographystyle to plainnat; for more information on
% natbib citation styles, see the natbib documentation, a copy of which
% is archived at http://www.jmlr.org/format/natbib.pdf

% Available options for package jmlr2e are:
%
%   - abbrvbib : use abbrvnat for the bibliography style
%   - nohyperref : do not load the hyperref package
%   - preprint : remove JMLR specific information from the template,
%         useful for example for posting to preprint servers.
%
% Example of using the package with custom options:
%
% \usepackage[abbrvbib, preprint]{jmlr2e}

%\usepackage{jmlr2e}

\usepackage{natbib}
\bibliographystyle{plainnat}

\usepackage{amsmath, mathrsfs, amssymb, amsthm, fullpage, graphicx, mathtools, bbm, hyperref,xcolor}
\usepackage[ruled,vlined]{algorithm2e}
\newtheorem{theorem}{Theorem}[section]
\newtheorem{lemma}{Lemma}[section]

\newtheorem*{example*}{Example}

\makeatletter
\def\BState{\State\hskip-\ALG@thistlm}
\makeatother

\newtheorem{proposition}{Proposition}[section]
\newtheorem{corollary}{Corollary}[section]
  {
      \theoremstyle{plain}
      
  }

%\theoremstyle{fact}
%\newtheorem{fact}{Fact}[section]
%\newtheorem{conjecture}{Conjecture}[section]

%\numberwithin{equation}{section}

%\allowdisplaybreaks

\usepackage{caption}
\captionsetup[figure]{font=footnotesize}
\captionsetup{labelfont=bf}

\usepackage{chngcntr}
\counterwithin{figure}{section}

\newcommand{\mme}[0]{\mathbb{E}}
\newcommand{\mmp}[0]{\mathbb{P}}
\newcommand{\mmr}[0]{\mathbb{R}}
\newcommand{\mmn}[0]{\mathbb{N}}

\newcommand{\bone}[0]{\mathbbm{1}}

 % inner command, used by \rchi

\usepackage{multirow,float}
%\usepackage[numbers,square]{natbib}

% Definitions of handy macros can go here

% Heading arguments are {volume}{year}{pages}{date submitted}{date published}{paper id}{author-full-names}

\usepackage{lastpage}
%\jmlrheading{23}{2022}{1-\pageref{LastPage}}{10/24; }{}{21-0000}{Isaac Gibbs and Emmanuel Cand\`{e}s}

% Short headings should be running head and authors last names

%\ShortHeadings{Conformal Inference for Online Prediction
%}{Gibbs and Cand\`{e}s}
%\firstpageno{1}
\raggedbottom

\begin{document}

\title{Conformal Inference for Online Prediction with Arbitrary Distribution Shifts}

\author{Isaac Gibbs\footnote{Department of Statistics, Stanford University} \and Emmanuel Cand\`{e}s\footnote{Departments of Statistics and Mathematics, Stanford University}}

%\editor{Undetermined}

\maketitle

\begin{abstract}%   <- trailing '%' for backward compatibility of .sty file
We consider the problem of forming prediction sets in an online setting where the distribution generating the data is allowed to vary over time. Previous approaches to this problem suffer from over-weighting historical data and thus may fail to quickly react to the underlying dynamics. Here we correct this issue and develop a novel procedure with provably small regret over all local time intervals of a given width. We achieve this by modifying the adaptive conformal inference (ACI) algorithm of Gibbs and Cand\`{e}s (2021) to contain an additional step in which the step-size parameter of ACI's gradient descent update is tuned over time. Crucially, this means that unlike ACI, which requires knowledge of the rate of change of the data-generating mechanism, our new procedure is adaptive to both the size and type of the distribution shift. Our methods are highly flexible and can be used in combination with any baseline predictive algorithm that produces point estimates or estimated quantiles of the target without the need for distributional assumptions. We test our techniques on two real-world datasets aimed at predicting stock market volatility and COVID-19 case counts and find that they are robust and adaptive to real-world distribution shifts.
\end{abstract}

%\begin{keywords}
%  Conformal inference, online prediction, distribution shift, prediction set, calibration.
%\end{keywords}

\section{Introduction}

We consider a situation in which we observe a data stream $\{(X_t,Y_t)\}_{1 \leq t \leq T}$ generated by a dynamic process in which the distribution of $(X_t,Y_t)$ (and more broadly of subsequences $(X_t,Y_t)$, $\dots$, $(X_{t+s},Y_{t+s})$) is allowed to vary over time. At each time point $t$, our goal is to use the previously observed data $\{(X_s,Y_s)\}_{s<t}$, along with the new covariates $X_t$, to form a prediction set for the target value $Y_t$. We are motivated by numerous modern applications in which a complex model (e.g.~neural network, random forest) is employed to produce a point estimate of $Y$. While these models have been found to perform well on i.i.d.~training and testing data, rigorous guarantees on their accuracy are lacking and their empirical performance has been found to degrade under distribution shift (\cite{WILDS2021}). Thus, a more robust understanding of the uncertainty underlying these methods' predictions is necessary before they can be deployed in practice.

Prediction sets have become a popular tool for quantifying the accuracy of machine learning models. Formally, we say that $\hat{C}(\cdot) \subseteq \mmr$ is a $1-\alpha$ prediction set for $Y$ if $\mmp(Y \in \hat{C}(X)) = 1-\alpha$. Conceptually, by examining the size and scope of $\hat{C}(X)$ the user can gain information above the uncertainty underlying a model's point-prediction of $Y$. 

Many of the most useful tools for computing prediction sets come from the field of conformal inference. This general framework provides a flexible set of methodologies for transforming the point or quantile-estimates output by a black-box machine learning model into valid prediction sets \cite[see e.g.][]{Saunders1999, Vovk1999, VovkBook, Gammerman2007, Shafer2008, Lei2014, Sadinle2019, Barber2020, Baber2021}. Conformal inference is particularly powerful because it allows users to leverage improvements in the baseline predictor to obtain smaller and more accurate prediction sets (\cite{Romano2019}). 

The original conformal inference methods developed by Vovk and colleagues typically require that all the training and testing data be exchangeable (e.g.~be i.i.d.), and in particular, require that all points have the same marginal distribution. While the earlier literature does contain some extensions beyond exchangeabiltiy to data sequences that are locally exchangeable or can be transformed to an exchangeable sequence (\cite{VovkBook}), the applicability of these methods is limited. More recently, many authors have extended conformal inference to account for a wider variety of distribution shifts and dependency structures within the training and testing data. Some examples including methods for stationary time series  (\cite{Chernozhukov2018}), cross-sectional time series (\cite{Lin2022}), label shift (\cite{Podkopaev2021}), covariate shift (\cite{Tibs2019, Yang2022}), and generic methods for re-weighting non-identically distributed data (\cite{Barber2022}). 

The problem of adjusting a conformal predictor to adapt to arbitrary online distribution shifts was originally proposed by \citet{Gibbs2021}. They gave a gradient descent method, called adaptive conformal inference (ACI), that tunes the width of the prediction sets to adapt to the underlying uncertainty in the environment. While that method was found to produce good results both theoretically and empirically, its performance critically relies on a good specification of its step-size parameter. Specifically, it was shown that for optimal performance, the step-size should be set proportional to the underlying rate of change in the environment, which is unknown in practice. 

Recently, two alternatives to ACI have been proposed that avoid the need for a user-specified step-size. The most direct approach is that of \citet{Zaffran2022}, which gives an expert learning method for adaptively tuning the step-size based off of the historical performance of a set of candidate values. Moving away from gradient descent based methods, \citet{Bastani2022} propose an alternative approach in which the width of the prediction set is chosen directly from a set of candidate thresholds. While these methods can be argued to improve on ACI, both approaches have the shortcoming of heavily weighting older historical data when choosing amongst their candidate values. As our experiments in Section \ref{sec:simulated_data} show, this can lead to a failure to quickly adapt when abrupt changes occur. 

In this article, we propose an alternative expert selection scheme for choosing the step-size in ACI. We show that unlike previous approaches, our method can control the deviation in the coverage probability locally over time and we bound the local coverage of our method in terms of the local rate of change of an underlying optimal target parameter. The importance of this result is not purely theoretical, and we provide example settings where alternative methods produce worse adaptivity to the local dynamics than our approach. We evaluate the performance of our method on two real-world prediction tasks aimed towards predicting stock market volatility and COVID-19 case counts, and find that it adapts well to real-world dynamics.

\section{Methodology}\label{sec:methods}

\subsection{Conformal inference}\label{sec:basic_conf}

Let $(X_1,Y_1),\dots,(X_n,Y_n) \in \mmr^d \times \mmr$ denote a set of observed training data and $(X_{n+1},Y_{n+1})$ denote a new test point from which we only observe $X_{n+1}$. In order to construct a prediction set, conformal inference begins by imputing guesses $y$ for $Y_{n+1}$. Then, for each candidate value $y$, a conformity score $S : (\mmr^d \times \mmr)^n \times \mmr^d \times \mmr \to \mmr$ is used to measure how well the data point $(X_{n+1},y)$ conforms with $(X_1,Y_1),\dots,(X_n,Y_n)$. Typically, this is done by first using all $n+1$ datapoints to fit a regression and then measuring how well $y$ aligns with the prediction of the fitted model at $X_{n+1}$. For example, we may take $S(\cdot)$ to be the absolute residual
\begin{equation}\label{eq:conf_score_ex}
S((X_j,Y_j)_{1 \leq j \leq n},(X_{n+1},y)) := |y - \hat{\mu}(X_{n+1})|,
\end{equation}
where $\hat{\mu}$ is an estimate of $\mme[Y|X]$ fit on $(X_1,Y_1),\dots,(X_n,Y_n),(X_{n+1},y)$, or the estimated probability
\[
S((X_j,Y_j)_{1 \leq j \leq n},(X_{n+1},y)) := 1 - \hat{\pi}(y|X_{n+1}),
\]
where $\hat{\pi}(y|X_{n+1}) $ is a fitted estimate of $\mmp(Y_{n+1} = y | X_{n+1})$. In the final step of conformal inference the value $y$ is added to the prediction set if the test score, $S((X_j,Y_j)_{1 \leq j \leq n},(X_{n+1},y))$, is small relative to the training scores, $\{S((X_j,Y_j)_{1 \leq j \leq  n, j \neq i},(X_{n+1},y), (X_i,Y_i))\}_{i=1}^n$, e.g. if the residual $|y-\hat{\mu}(X_{n+1})|$ is small relative to $|Y_i - \hat{\mu}(X_i)|$.

In general, the only requirement on $S(\cdot)$ necessary for existing theoretical results to hold is that it is unchanged by permutations of its $n+1$ arguments. In the context of this article, the data $(X_1,Y_1)$,$\dots$,$(X_n,Y_n)$, $(X_{n+1},Y_{n+1})$ will often have a temporal dependence structure. As a result, it may not be sensible to treat all of the arguments to $S(\cdot)$ symmetrically and we will often use conformity scores that are not permutation invariant.

For ease of notation, let $S^y_i := S((X_j,Y_j)_{1 \leq j \leq n, j \neq i},(X_{n+1},y),   (X_i,Y_i))$ and $S^{y}_{n+1}$ denote the test score $S((X_j,Y_j)_{1 \leq j \leq n},(X_{n+1},y))$. For any $\tau \in \mmr$ and distribution $\mathcal{D}$, let $ \text{Quantile}\left(\tau,\mathcal{D} \right)$ denote the $\tau_{\text{th}}$ quantile of $\mathcal{D}$ with the convention that $\text{Quantile}\left(\tau,\mathcal{D} \right) = \infty$ (respectively $-\infty$) for all $\tau \geq 1$ (respectively $\tau \leq 0$). Then, formally conformal inference outputs the prediction set
\begin{equation}\label{eq:conformal_pred_set}
\hat{C}_{n+1} := \left\{y : S_{n+1}^y \leq \text{Quantile}\left(1-\alpha, \frac{1}{n+1} \sum_{i=1}^{n+1} \delta_{S_i^y} \right) \right\}.
\end{equation}
As alluded to in the introduction, this set satisfies the following coverage guarantee.
\begin{theorem}
If the data $(X_1,Y_1),\dots,(X_{n+1},Y_{n+1})$ are exchangeable and $S(\cdot)$ is invariant to permutations of its $n+1$ arguments, then
\[
\mmp(Y_{n+1} \in \hat{C}_{n+1})  \geq 1-\alpha.
\]
Moreover, if in addition the values $\{S_i^{Y_{n+1}}\}_{1 \leq i \leq n+1}$ are distinct with probability one, then
\[
 \mmp(Y_{n+1} \in \hat{C}_{n+1}) \leq 1 - \alpha +\frac{1}{n+1}.
\]
\end{theorem}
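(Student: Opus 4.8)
The plan is to prove both bounds by exploiting the exchangeability of the conformity scores at the true value $y = Y_{n+1}$. The key observation is that the event $\{Y_{n+1} \in \hat{C}_{n+1}\}$ can be rewritten purely in terms of the rank of $S_{n+1}^{Y_{n+1}}$ among the collection $\{S_i^{Y_{n+1}}\}_{1 \leq i \leq n+1}$. First I would fix $y = Y_{n+1}$ and observe that membership in $\hat{C}_{n+1}$ is equivalent to the inequality
\[
S_{n+1}^{Y_{n+1}} \leq \text{Quantile}\left(1-\alpha, \frac{1}{n+1} \sum_{i=1}^{n+1} \delta_{S_i^{Y_{n+1}}}\right).
\]
Because the score function $S(\cdot)$ is symmetric in its first $n$ arguments, at the true value the full vector $(S_1^{Y_{n+1}}, \dots, S_{n+1}^{Y_{n+1}})$ is an exchangeable random vector: swapping the test point $(X_{n+1},Y_{n+1})$ with any training point $(X_i,Y_i)$ leaves the joint distribution of the augmented dataset unchanged, and hence leaves the distribution of the score vector unchanged.

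For the lower bound, I would translate the quantile inequality into a rank statement. By the definition of the empirical quantile, $S_{n+1}^{Y_{n+1}}$ lies at or below the $(1-\alpha)$-quantile of the empirical distribution precisely when its rank (counting ties appropriately) is at most $\lceil (1-\alpha)(n+1) \rceil$. By exchangeability, the rank of $S_{n+1}^{Y_{n+1}}$ is (stochastically) uniform over $\{1,\dots,n+1\}$ in the sense that $\mmp(\text{rank} \leq k) \geq k/(n+1)$, and taking $k = \lceil (1-\alpha)(n+1)\rceil$ yields $\mmp(Y_{n+1} \in \hat{C}_{n+1}) \geq \lceil (1-\alpha)(n+1)\rceil / (n+1) \geq 1-\alpha$. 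The main point to handle carefully here is the treatment of ties and the $\leq$ versus $<$ conventions in the quantile definition, which is exactly why the result is only an inequality in general.

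For the upper bound, the distinctness assumption removes the tie-breaking ambiguity, so the rank of $S_{n+1}^{Y_{n+1}}$ is \emph{exactly} uniform on $\{1,\dots,n+1\}$: by exchangeability and almost-sure distinctness, each of the $n+1$ scores is equally likely to occupy each rank, giving $\mmp(\text{rank} = k) = 1/(n+1)$ for every $k$. Then $\mmp(Y_{n+1} \in \hat{C}_{n+1}) = \mmp(\text{rank} \leq \lceil(1-\alpha)(n+1)\rceil) = \lceil(1-\alpha)(n+1)\rceil/(n+1)$, and bounding $\lceil(1-\alpha)(n+1)\rceil \leq (1-\alpha)(n+1) + 1$ produces the stated $1-\alpha + 1/(n+1)$ upper bound. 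The hard part, and the step requiring the most care, is the reduction from the quantile inequality to the clean rank statement: one must verify that the empirical $(1-\alpha)$-quantile threshold corresponds exactly to the rank cutoff $\lceil(1-\alpha)(n+1)\rceil$ under the stated quantile conventions (including the $\pm\infty$ edge cases), and that the exchangeability argument is applied to the correct symmetric function of the data so that the uniform-rank conclusion is valid.
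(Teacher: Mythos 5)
Your argument is correct and is exactly the standard quantile-to-rank reduction plus exchangeability-of-scores argument that the paper relies on by citation (it does not reprove this result, deferring to Lemma 1 of Romano et al.). You correctly flag the only delicate points — the tie-handling in the rank/quantile equivalence for the lower bound and the exact uniformity of the rank under almost-sure distinctness for the upper bound — so nothing essential is missing.
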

A full proof of this result can be found in Lemma 1 of \citet{Romano2019}. For an earlier treatment of the first part of the Theorem see also \citet{VovkBook}.

Unfortunately, in most practical examples, computing $\hat{C}_{n+1}$ requires the user to fit the regression function $\hat{\mu}$ or $\hat{\pi}$ for all possible values of $y$. As this is not usually computationally feasible, many implementations of conformal inference use a data splitting approach in which the regression estimate is pre-fit in advance using a separate set of training data (\cite{Papadopoulos2002, VovkBook, Papadopoulos2008SplitConf}). The methods developed in this article do not rely on any formal guarantee of conformal inference and can be used in conjunction with any procedure that produces estimated quantiles of a conformity score. Thus, in our experiments, we avoid extraneous computation by using procedures that do not strictly adhere to the construction given by (\ref{eq:conformal_pred_set}).

\subsection{Adaptive conformal inference} 

The methodology developed in this article builds upon the adaptive conformal inference (ACI) algorithm proposed by \cite{Gibbs2021}. This procedure accounts for non-exchangeability by treating the quantile of the conformity scores as a tunable parameter that can be learned in an online fashion. More concretely, let $\mathcal{D}^y_t$ denote our estimate of the conformity score distribution at time-step $t$ with imputed value $y$. For instance, in our experiments we will often use the empirical distribution of the most recent $r$ conformity scores, $\mathcal{D}^y_t = \frac{1}{r} \sum_{i=t-r+1}^{t}\delta_{S_i^y}$ (recall that standard conformal inference would take $r=t$). Let
\begin{equation}\label{eq:beta_pred_set}
\hat{C}_{t}(\beta) := \left\{y : S_{t}^y \leq \text{Quantile}\left(1-\beta, \mathcal{D}^y_t \right) \right\},
\end{equation}
denote the prediction set obtained at timestep $t$ using the $1-\beta$ quantile of $\mathcal{D}^y_t$. Then, without any assumptions on the data generating distribution, we know that $\beta \mapsto \mmp(Y_{t} \in \hat{C}_t(\beta))$ is non-increasing with $\mmp(Y_{t} \in \hat{C}_t(0)) = 1$ and $\mmp(Y_{t} \in \hat{C}_t(1)) = 0$. If, in addition, we make the mild assumption that $\beta \mapsto \mmp(Y_{t} \in \hat{C}_t(\beta))$ is continuous, then there exists an optimal value, $\alpha^*_t$, such that $\mmp(Y_{t} \in \hat{C}_t(\alpha^*_t)) = 1-\alpha$. Since we do not know this optimal value \textit{a priori}, ACI estimates it in an online fashion using a parameter $\alpha_t$ that is updated as
\begin{equation}\label{eq:aci_update}
\alpha_{t+1} = \alpha_{t} + \gamma(\alpha - \text{err}_{t}),
\end{equation}
where $\gamma > 0$ is a step-size parameter and
\[
\text{err}_{t} := \begin{cases}
0, \text{ if } Y_{t} \in \hat{C}_{t}(\alpha_{t}),\\
1, \text{ if } Y_{t} \notin \hat{C}_{t}(\alpha_{t}).
\end{cases}
\]
In simple terms, the update (\ref{eq:aci_update}) can be seen as increasing/decreasing the size of the prediction set in response to the historical under/over coverage of the algorithm. 

A natural criticism of this approach, originally raised in \cite{Bastani2022}, is that ACI can obtain good coverage not because it successfully learns $\alpha^*_t$, but rather simply due to the fact that it reactively corrects its past mistakes. In particular, it may be the case that $\alpha_t$ oscillates between being well below and well above $\alpha^*_t$ and thus good coverage is obtained only through a cancellation of positive and negative errors. In Section \ref{sec:cond_on_alphat_cov} we give empirical evidence indicating that the new methods developed in this article do not exhibit such pathological behaviour.

Returning to (\ref{eq:aci_update}), the critical difficulty in implementing ACI is the choice of $\gamma$. \cite{Gibbs2021} give theoretical results suggesting that $\gamma$ should be chosen proportional to the size of the variation in $\alpha^*_t$ across time. However, this value is unknown and they give no procedure for estimating it. Additionally, much of the theory given in \cite{Gibbs2021} is only valid under two additional assumptions.
\begin{enumerate}
\item
The conformity score function $S((X_s,Y_s)_{s < t},(X_{t},y)) = S(X_t,y)$ is a fixed function that depends only on the new data point $(X_t,y)$ and does not use the most recent data $(X_s,Y_s)_{s < t}$ to recalibrate its predictions. For example, under this assumption, the conformity score (\ref{eq:conf_score_ex}) would use a regression function $\hat{\mu}(\cdot)$ that is fixed in advance and not updated as time progresses.
 \item 
 Instead of using an adaptive distribution, $\mathcal{D}_t^y$ to generate quantiles for the prediction set in (\ref{eq:beta_pred_set}) we instead have some fixed reference distribution, $\mathcal{D}$ that the conformity scores are compared to, i.e. the prediction set can be written as
 \[
 \hat{C}_t(\alpha_t) = \left\{ y : S(X_t,y) \leq \text{Quantile}(1-\alpha_t,\mathcal{D}) \right\}.
 \]
\end{enumerate}
These two assumptions are clearly problematic since under distribution shift the most recent data should be used to recalibrate both the regression function and the estimated distribution of the scores. The most obvious consequence of using fixed models is an increase in the size of the prediction sets over time as the true model drifts and the errors in the point predictions made by the regression model grow. More subtly, holding $\mathcal{D}$ fixed can cause the coverage probability, $\mmp(Y_t \in \hat{C}_t(\alpha_t))$, to sharply deviate from $1-\alpha$ as the past conformity scores no longer reflect the current situation. This will lead to large oscillations in $\alpha^*_t$ that increase the difficulty of the online learning problem. In the following sections, we develop a new method and novel theoretical results that make no assumptions on $S(\cdot)$ and $\mathcal{D}$ and thus allow these quantities to be updated over time.

\subsection{Dynamically-tuned adaptive conformal inference}\label{sec:FACI}

In order to describe our new method, it is useful to first observe that the ACI update (\ref{eq:aci_update}) can be viewed as a gradient descent step with respect to the pinball loss. To see this, let
\[
\beta_t := \sup \{\beta : Y_t \in \hat{C}_t(\beta)\},
\]
be the value of $\beta$ such that $\hat{C}_t(\beta_t)$ is the smallest prediction set containing $Y_t$. Recall the definition of the pinball loss
\[
\ell(\beta_t ,\theta ) := \alpha(\beta_t - \theta) - \min\{0,\beta_t - \theta\}.
\]
Then, one can verify that (\ref{eq:aci_update}) is equivalent to the update 
\[
\alpha_{t+1} := \alpha_{t}  - \gamma \nabla_{\theta} \ell(\beta_t,\alpha_t). \footnote{Here we have ignored the edge case $\beta_t = \alpha_t$. In this case to match the original ACI update one should take the smallest subgradient of $\ell(\beta_t,\alpha_t)$.}
\]
Through this lens, ACI can be viewed as a gradient descent procedure with respect to the sequence of convex losses $\{\ell(\beta_t,\cdot)\}$. Thus, in order to learn $\gamma$ we can utilise popular methods from the online convex optimization literature. In particular, we will employ an exponential re-weighting scheme that chooses a value for $\gamma$ based off of the historical performance of a set of candidate values. Methods of this type have a long history dating back to the original work of \citet{Vovk1990}. Our specific procedure is a small modification of an algorithm proposed by \citet{Gradu2021}. The only difference between our approach and that taken by \citet{Gradu2021}, is that their method is designed to work in a control setting in which actions affect the future states of the system. This leads them to consider a surrogate loss function that accounts for the long-term dependence structure induced by the actions. Because we have no such dependence, we consider a simplified version of their method here.

We refer to the resulting procedure as dynamically-tuned adaptive conformal inference (DtACI, Algorithm \ref{alg:faci}). This algorithm takes as input a candidate set of values for $\gamma$ and constructs a corresponding candidate set of values for $\alpha_t$ by running multiple versions of ACI in parallel. In the convex optimization literature these parallel sequences are typically referred to as experts. The final value of $\alpha_t$ output at time $t$ is then chosen from among these experts by evaluating their historical performance. In effect, we learn the optimal value of $\gamma$ in an online fashion, enabling dynamic calibration of the prediction set to the size of the distribution shift in the environment.

\begin{algorithm}
 \KwData{Observed values $\{\beta_t\}_{1 \leq t \leq T}$, set of candidate $\gamma$ values $\{\gamma_i\}_{1 \leq i \leq k}$, starting points $\{\alpha_1^i\}_{1 \leq i \leq k}$, and parameters $\sigma$ and $\eta$.}
 $w^i_1 \leftarrow 1,\ 1 \leq i \leq \in k$\;
 \For{$t=1,2,\dots,T$}{
 	Define the probabilities $p_t^i := w^i_t/\sum_{1 \leq j \leq k}w^{j}_t$, $\forall 1 \leq i \leq k$\;
 	Output $\alpha_t =  \alpha^i_t$ with probability $p_t^i$\;
 	$\bar{w}^i_t \leftarrow w^i_t \exp(-\eta \ell(\beta_t,\alpha_t^i)),\ \forall1 \leq i \leq k$\;
 	$\bar{W}_t \leftarrow \sum_{1 \leq i \leq k} \bar{w}^i_t $\;
 	$w^i_{t+1} \leftarrow (1-\sigma) \bar{w}^i_t + \bar{W}_t \sigma/k$\;
 	$\text{err}^i_{t} := \bone\{Y_t \notin \hat{C}_t(\alpha^i_t)\}$, $\forall 1 \leq i \leq k$\;
 	$\text{err}_{t} := \bone\{Y_t \notin \hat{C}_t(\alpha_t)\}$\;
 	$\alpha^i_{t+1} = \alpha^i_{t} + \gamma_i(\alpha - \text{err}^i_{t})$, $\forall 1 \leq i \leq k$\;
 }
 
 \caption{DtACI, modified version of Algorithm 1 in \cite{Gradu2021}.}
 \label{alg:faci}
\end{algorithm}

Before moving on we note that Algorithm \ref{alg:faci} is not completely parameter free. In fact, while we have removed the need for an unknown step-size parameter, this has come at the cost of adding two unknown weight parameters, $\eta$ and $\sigma$. While this may initially appear to be problematic, in Section \ref{sec:theory}, we will outline a simple procedure for choosing $\sigma$ and $\eta$ that does not involve any unknown quantities. This contrasts sharply with the situation for $\gamma$ in which an optimal choice requires an in-depth knowledge of the distribution shift. Moreover, in some environments the size of the distribution shift can vary over time and a single constant value for $\gamma$ can perform poorly. For example, in Section \ref{sec:simulated_data} we demonstrate a setting in which adaptively tuning $\gamma$ allows us to quickly respond to an abrupt change in the environment, while a more stationary choice of $\gamma$ lags behind. This issue does not exist for $\sigma$ and $\eta$ and we provide extensive empirical evidence demonstrating that a single choice of these parameters performs well across a large variety of environments. Finally, a more theoretical discussion on the optimal settings for $\sigma$ and $\eta$ that justifies a specific fixed choice for these parameters can be found at the end of Section \ref{sec:reg_bounds}.
 
\subsection{Comparison to existing methods}\label{sec:prior_work}
 
As mentioned in the introduction, two other alternatives to ACI have been proposed in the literature. Most closely related to the present paper is the AgACI method of \citet{Zaffran2022}, which aims to learn the value of $\gamma$ in ACI using the adaptive Bernstein online expert aggregation scheme of \citet{Wintenberger2017}. To describe this method, let $\{L_t^i\}$ and $\{\eta_t^i\}$ denote the cumulative loss and learning rate for expert $i \in \{1,\dots,k\}$ given by the recursive updates, $L_t^i = 0$, $\eta_{t}^i = 0$, and
\begin{align*}
    & \ell_{t}^i := (\text{err}_{t-1}-\alpha)(\alpha_t^i - \alpha_t),\\
    & L_t^i := L_{t-1}^i + \frac{1}{2}(\ell_t^i(1+\eta_{t-1}^i\ell_t^i) + 2^{\lceil \log_2(\max_{1 \leq s \leq t}|\ell^i_s|)  \rceil + 1} \bone\{\eta_{t-1}^i\ell_t^i > 1/2\}),\\
    & \eta^i_t := \max\left\{ 2^{-\lceil \log_2(\max_{1 \leq s \leq t}|\ell^i_s|)  \rceil - 1}, \sqrt{\frac{\log(1/k)}{\sum_{s=1}^{t}(\ell_s^i)^2}} \right\}.
\end{align*} 
The details of these definitions are not critical. The most important thing to note is that $\ell_t^i$ is exactly equal to the first order linearization of the difference $\ell(\beta_{t-1},\alpha_{t-1}^i) - \ell(\beta_{t-1},\alpha_t)$ and thus $L_t^i$ can be interpreted as a cumulative loss over time. Then, with these definitions in hand AgACI defines the probabilities 
\[
\widetilde{p}_t^i := \frac{\eta_t^i \exp(-\eta_t^iL_t^i)}{\sum_{j=1}^k\eta_t^j \exp(-\eta_t^jL_t^j)},
\]
and outputs the estimate $\alpha_t := \sum_{i=1}^k \widetilde{p}_t^i \alpha_t^i$.

The primary difference between AgACI and our method is the relative weight given to the historical performance of the experts. To see this, we first observe that by unravelling the DtACI updates, the DtACI weights can be re-written as a mixture distribution where element $s$ considers the most recent $s$ losses. More precisely, we have
\[
w_{t+1}^i = \sum_{s=0}^t(1-\sigma)^{t-s}\bar{W}_{s} \left(\frac{\sigma}{k}\right)^{\bone\{s\neq 0\}}\exp\left(-\eta \sum_{j=s+1}^{t} \ell(\beta_j,\alpha_j^i)\right),
\]
where for ease of notation we have set $\bar{W}_0 = 0$. Without any formal analysis, it can be immediately seen that the more recent datapoints appear more often in this mixture and thus contribute more to our choice of weights. On the other hand, the AgACI weights are based off a cumulative sum of all previous losses and thus assign a similar degree of importance to all historical data-points. The upside of this choice is that in environments where the rate of distribution shift is constant, AgACI can effectively converge on a single optimal step-size. However, this comes at the cost of reduced adaptivity over time. For instance, if the environment starts in a state of slow distribution drift, but then undergoes an abrupt shift, AgACI can fail to increase the step-size quickly and thus be slow to react to the change. Empirical examples demonstrating these properties are given in Section \ref{sec:simulated_data}. 

The second alternative to ACI that we consider is the multivalid conformal prediction (MVP) method of \citet{Bastani2022}. Instead of targeting the optimal parameter $\alpha^*_t$, this algorithm chooses $\alpha_t$ in order to explicitly obtain the desired long-term miscoverage $\lim_{T \to \infty} T^{-1} \sum_{t=1}^T \text{err}_t = \alpha$. In addition, MVP is also designed to satisfy threshold-calibrated coverage, i.e. for every $\tau$, $\lim_{T \to \infty} (\sum_{t=1}^T\bone\{\alpha_t = \tau\})^{-1} \sum_{t=1}^T \text{err}_t\bone\{\alpha_t = \tau\} = \alpha$. At a high level, this is accomplished by setting a grid of possible choices for $\alpha_t$, and then at each time step outputting the value in the grid that has produced the best historical coverage. 

While MVP can perform well in stationary environments where there exists a single optimal choice for the threshold, it does not give significant adaptivity to local changes. This is demonstrated by our experiments in Section \ref{sec:simulated_data} where MVP fails to adjust to the local variation in $\alpha_t^*$. For a more complete description of the MVP algorithm see Section \ref{sec:MVP} of the Appendix.

Finally, we emphasize that the good local coverage properties of DtACI are not solely an empirical phenomena. Indeed, in the next section we give bounds that control the difference between the estimates $\alpha_t$ produced by DtACI and the optimal values, $\alpha^*_t$ over any local time interval. This theory is new to our methods and no similar results exist for AgACI or MVP.

\section{Coverage properties of DtACI}\label{sec:theory}

In this section we outline the main coverage guarantees of DtACI. We begin by drawing from known results in the online convex optimization literature that bound a quantity known as the dynamic regret of DtACI. We then draw a connection between this regret and the coverage. Finally, we evaluate the long-term coverage in a specialized case where the hyperparameters $\eta$ and $\sigma$ decay to $0$ over time. All proofs are deferred to the Appendix.

\subsection{Dynamic regret of DtACI} \label{sec:reg_bounds}

Our first result quantifies the error in the expert aggregation scheme by bounding the difference between the loss we obtain and that of the best expert.

\begin{lemma}[Modified version of Lemma A.2 in \cite{Gradu2021}]\label{lem:adapt_reg_bound}
Assume that $\sigma \leq 1/2$. Then, for any interval $I = [r,s] \subseteq [T]$ and any $1 \leq i \leq k$,
\begin{equation}\label{eq:expert_reg}
 \sum_{t=r}^s \mme[\ell(\beta_t,\alpha_t)] \leq   \sum_{t=r}^s \ell(\beta_t,\alpha_t^i) + \eta\sum_{t=r}^s \mme[\ell(\beta_t,\alpha_t)^2]  +  \frac{1}{\eta}\left(\log(k/\sigma\right) + |I|2\sigma ),
\end{equation}
where the expectation is over the randomness in Algorithm \ref{alg:faci} and the data $\beta_1,\dots,\beta_T$ can be viewed as fixed.
\end{lemma}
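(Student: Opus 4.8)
The plan is to recognize Algorithm~\ref{alg:faci} as the Fixed-Share exponentially weighted aggregation scheme run on the expert losses $\ell_t^i := \ell(\beta_t,\alpha_t^i)$, and then to carry out the standard second-order adaptive-regret analysis, with the sharing parameter $\sigma$ supplying the probability floor that upgrades a static regret bound to one valid on an arbitrary subinterval $I=[r,s]$. I would treat $\beta_1,\dots,\beta_T$ as fixed and let $\mcf_{t-1}$ be the $\sigma$-field generated by the algorithm's selections $\alpha_1,\dots,\alpha_{t-1}$; since $w_t^i$, $p_t^i$ and $\alpha_t^i$ are all $\mcf_{t-1}$-measurable, the key identifications $\sum_i p_t^i \ell_t^i = \mme[\ell(\beta_t,\alpha_t)\mid\mcf_{t-1}]$ and $\sum_i p_t^i(\ell_t^i)^2 = \mme[\ell(\beta_t,\alpha_t)^2\mid\mcf_{t-1}]$ hold. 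Writing $W_t := \sum_{i} w_t^i$, the only piece of bookkeeping I set up first is the normalization identity $W_{t+1}=\bar W_t$, immediate from $w_{t+1}^i=(1-\sigma)\bar w_t^i+\bar W_t\sigma/k$.

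First I would establish the per-round inequality. The pinball loss is nonnegative, so $x:=\eta\ell_t^i\ge 0$, and the elementary bound $e^{-x}\le 1-x+x^2$ combined with $\log(1+u)\le u$ gives
\[
\log\frac{\bar W_t}{W_t}=\log\sum_i p_t^i e^{-\eta\ell_t^i}\le -\eta\sum_i p_t^i\ell_t^i+\eta^2\sum_i p_t^i(\ell_t^i)^2.
\]
Summing over $t\in I$ and telescoping with $W_{t+1}=\bar W_t$, so that $\sum_{t=r}^s\log(\bar W_t/W_t)=\log W_{s+1}-\log W_r$, yields
\[
\eta\sum_{t=r}^s\sum_i p_t^i\ell_t^i\le \eta^2\sum_{t=r}^s\sum_i p_t^i(\ell_t^i)^2+\log W_r-\log W_{s+1}.
\]

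The crux is to lower bound $\log W_{s+1}$ against a single expert over $I$, and this is where $\sigma$ enters. Discarding the nonnegative mixing mass gives $w_{t+1}^i\ge(1-\sigma)\bar w_t^i=(1-\sigma)w_t^i e^{-\eta\ell_t^i}$, and iterating from $r$ to $s$ produces $w_{s+1}^i\ge(1-\sigma)^{|I|}w_r^i\exp(-\eta\sum_{t=r}^s\ell_t^i)$. Because the sharing step injects mass $\bar W_{t-1}\sigma/k=W_t\sigma/k$ into every coordinate (with the uniform initialization handling $t=1$), one has the floor $p_r^i=w_r^i/W_r\ge\sigma/k$. Using $W_{s+1}\ge w_{s+1}^i$ together with this floor and $-\log(1-\sigma)\le 2\sigma$ for $\sigma\le 1/2$ gives
\[
\log W_r-\log W_{s+1}\le \eta\sum_{t=r}^s\ell_t^i+\log(k/\sigma)+2\sigma|I|.
\]
Substituting into the telescoped bound, dividing by $\eta$, rewriting the two inner sums as the conditional expectations above, and taking total expectations (the comparator $\sum_{t=r}^s\ell(\beta_t,\alpha_t^i)$ being handled pathwise via the tower property) delivers~(\ref{eq:expert_reg}). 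I expect the main obstacle to be this sharing step: one must verify the floor $p_r^i\ge\sigma/k$ at the left endpoint $r$ of an \emph{arbitrary} interval, since it is precisely this uniform lower bound---rather than weight accumulated since time~$1$---that makes the regret bound hold on every subinterval rather than only on $[1,T]$. The remaining work is routine convexity and telescoping bookkeeping.
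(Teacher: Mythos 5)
Your proposal is correct and follows essentially the same route as the paper's proof: the same per-round inequality from $e^{-x}\le 1-x+x^2$, the same telescoping of $\log(W_{s+1}/W_r)$, the same single-expert lower bound $w_{s+1}^i\ge(1-\sigma)^{|I|}w_r^i\exp(-\eta\sum_t\ell_t^i)$ with the floor $p_r^i\ge\sigma/k$, and the same use of $\log(1-\sigma)\ge-2\sigma$ for $\sigma\le1/2$. Your treatment is if anything slightly more explicit than the paper's on the two points it glosses over (the identification of the weighted sums with conditional expectations, and the verification of the probability floor at an arbitrary left endpoint $r$).
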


With this lemma in hand, we now turn to our true target, namely the values $\alpha^*_t$ defined in Section \ref{sec:FACI}. Our first step is to recall the following regret bound for gradient descent with a dynamic target.

\begin{lemma}[Application of Theorem 10.1 of \citet{Hazan2019}]\label{lem:reg_ogd}
For any fixed interval $I = [r,s]$, sequence $\alpha^*_r,\dots,\alpha^*_s$ , and $1 \leq i \leq k$,
\[
\sum_{t=r}^s \ell(\beta_t,\alpha_t^i) - \sum_{t=r}^s \ell(\beta_t,\alpha^*_t) \leq \frac{3}{2\gamma_i} (1+\gamma_i)^2 \left(\sum_{t=r+1}^s |\alpha^*_t - \alpha^*_{t-1}| + 1 \right) + \frac{1}{2}\gamma_i |I|.
\]
\end{lemma}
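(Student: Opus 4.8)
The plan is to recognize the recursion defining $\alpha^i_t$ as an (unconstrained) online gradient descent on the sequence of pinball losses $\ell(\beta_t,\cdot)$, and then run the classical dynamic-regret argument while being careful that the comparator $\alpha^*_t$ moves with $t$. The first step is to compute the subgradient $g^i_t := \nabla_\theta \ell(\beta_t,\alpha^i_t) = \bone\{\beta_t < \alpha^i_t\} - \alpha$. Since $\text{err}^i_t = \bone\{Y_t \notin \hat{C}_t(\alpha^i_t)\} = \bone\{\beta_t < \alpha^i_t\}$ (by the definition of $\beta_t$ and the monotonicity of $\beta \mapsto \hat{C}_t(\beta)$), the update rewrites as the gradient step $\alpha^i_{t+1} = \alpha^i_t - \gamma_i g^i_t$, and $|g^i_t| \leq \max\{\alpha, 1-\alpha\} \leq 1$, so $(g^i_t)^2 \leq 1$.

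Next I would record the a priori ranges $\beta_t, \alpha^*_t \in [0,1]$ and $\alpha^i_t \in [-\gamma_i, 1+\gamma_i]$. The latter follows by induction on $t$: whenever $\alpha^i_t$ rises above $1 \geq \beta_t$ the gradient equals $1-\alpha > 0$ and pushes the iterate down, whenever it falls below $0 \leq \beta_t$ the gradient equals $-\alpha < 0$ and pushes it up, and each step changes $\alpha^i_t$ by at most $\gamma_i$. These ranges are exactly what will control the boundary terms later, and they are the one genuinely algorithm-specific ingredient.

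With this in place, convexity of the pinball loss gives $\ell(\beta_t,\alpha^i_t) - \ell(\beta_t,\alpha^*_t) \leq g^i_t(\alpha^i_t - \alpha^*_t)$, so it suffices to bound $\sum_{t=r}^s g^i_t(\alpha^i_t - \alpha^*_t)$. Expanding the square via the update identity yields $g^i_t(\alpha^i_t - \alpha^*_t) = \tfrac{1}{2\gamma_i}[(\alpha^i_t - \alpha^*_t)^2 - (\alpha^i_{t+1} - \alpha^*_t)^2] + \tfrac{\gamma_i}{2}(g^i_t)^2$. Summed over $I$, the last term contributes at most $\tfrac{1}{2}\gamma_i |I|$, which is precisely the final term of the claimed bound, so all that remains is the telescoping sum of squared distances.

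The crux, and the step I expect to be the main obstacle, is handling $\sum_{t=r}^s[(\alpha^i_t - \alpha^*_t)^2 - (\alpha^i_{t+1} - \alpha^*_t)^2]$ when the comparator is not fixed: this does not telescope cleanly. Reindexing and discarding the terminal negative term leaves $(\alpha^i_r - \alpha^*_r)^2 + \sum_{t=r+1}^s[(\alpha^i_t - \alpha^*_t)^2 - (\alpha^i_t - \alpha^*_{t-1})^2]$, and factoring $a^2 - b^2 = (a-b)(a+b)$ turns each summand into $(\alpha^*_{t-1} - \alpha^*_t)(2\alpha^i_t - \alpha^*_t - \alpha^*_{t-1})$. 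The boundedness step bounds the second factor by $2(1+\gamma_i)$, giving a per-term bound of $2(1+\gamma_i)|\alpha^*_t - \alpha^*_{t-1}|$, while $(\alpha^i_r - \alpha^*_r)^2 \leq (1+\gamma_i)^2$. Substituting and using $2(1+\gamma_i) \leq 3(1+\gamma_i)^2$ and $(1+\gamma_i)^2 \leq 3(1+\gamma_i)^2$ collapses everything into $3(1+\gamma_i)^2(\sum_{t=r+1}^s |\alpha^*_t - \alpha^*_{t-1}| + 1)$; dividing by $2\gamma_i$ produces the leading term and completes the bound.
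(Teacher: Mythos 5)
Your proof is correct, and it establishes the same bound (in fact a slightly sharper one, since $\tfrac{1}{2\gamma_i}\bigl[(1+\gamma_i)^2 + 2(1+\gamma_i)\sum_{t=r+1}^s|\alpha^*_t-\alpha^*_{t-1}|\bigr] \le \tfrac{3}{2\gamma_i}(1+\gamma_i)^2\bigl(\sum_{t=r+1}^s|\alpha^*_t-\alpha^*_{t-1}|+1\bigr)$). The difference from the paper is one of packaging rather than substance: the paper's proof is two sentences long --- it invokes Lemma 4.1 of Gibbs and Cand\`es to get $\alpha^i_t \in [-\gamma_i, 1+\gamma_i]$, notes the resulting bound on the loss, and then cites the dynamic-regret theorem for online gradient descent (Theorem 10.1 of Hazan) as a black box. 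You instead reprove that black box from first principles: the subgradient identification $g^i_t = \bone\{\beta_t<\alpha^i_t\}-\alpha$, the convexity inequality, the expansion $g^i_t(\alpha^i_t-\alpha^*_t) = \tfrac{1}{2\gamma_i}[(\alpha^i_t-\alpha^*_t)^2-(\alpha^i_{t+1}-\alpha^*_t)^2]+\tfrac{\gamma_i}{2}(g^i_t)^2$, and the reindexing/difference-of-squares trick to control the non-telescoping comparator terms by the path length. The algorithm-specific ingredient --- the a priori range of the iterates, which you derive by the same induction underlying the cited Lemma 4.1 --- is used in both arguments to bound the boundary and cross terms. What your version buys is a self-contained argument with explicit constants and no dependence on an external statement; what the paper's version buys is brevity. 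I would only flag that your identification of the update as $\alpha^i_{t+1}=\alpha^i_t-\gamma_i g^i_t$ silently corrects what appears to be a typo in Algorithm 1 (which writes $\alpha_t$ rather than $\alpha^i_t$ on the right-hand side); your reading is clearly the intended one and matches the paper's own description of running ACI in parallel.
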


By combining the previous two lemmas we obtain the main result of this section. 

\begin{theorem}\label{thm:dynamic_reg}
Let $\gamma_{\textnormal{max}} := \max_{1 \leq i \leq k} \gamma_i$ and assume that $\gamma_1 < \gamma_2 < \cdots < \gamma_{k}$ with $\gamma_{i+1}/\gamma_i \leq 2$ for all $1 < i \leq k$. Assume additionally that $\gamma_{k} \geq \sqrt{1+1/|I|}$ and $\sigma \leq 1/2$. Then, for any interval $I = [r,s] \subseteq [T]$ and any sequence $\alpha^*_r, \dots, \alpha^*_s \in [0,1]$,
\begin{align*}
 \frac{1}{|I|} \sum_{t=r}^s\mme[ \ell(\beta_t,\alpha_t)] - & \frac{1}{|I|} \sum_{t=r}^s \ell(\beta_t,\alpha_t^*) \leq  \frac{ \log(k/\sigma) + 2\sigma|I| }{\eta|I|} + \frac{\eta}{|I|}\sum_{t=r}^s \mme[\ell(\beta_t,\alpha_t)^2]\\
 & \ \ \ \ \ \ \ \ \ \ \ \ \ \ \ \  + 4(1+\gamma_{\textnormal{max}})^2 \max\left\{\sqrt{ \frac{ \sum_{t=r+1}^s |\alpha^*_t - \alpha^*_{t-1}|+1}{|I|}}, \gamma_1\right\} ,
\end{align*}
where the expectation is over the randomness in Algorithm \ref{alg:faci} and the data $\beta_1,\dots,\beta_T$ can be viewed as fixed.
\end{theorem}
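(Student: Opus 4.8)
The plan is to chain the two preceding lemmas and then optimize the resulting bound over the candidate grid $\{\gamma_i\}$. Fix any $1 \le i \le k$. Lemma~\ref{lem:adapt_reg_bound} bounds $\sum_{t=r}^s \mme[\ell(\beta_t,\alpha_t)]$ by the loss of expert $i$ plus its two error terms, while Lemma~\ref{lem:reg_ogd} bounds that expert's loss against the dynamic target sequence $\alpha^*_r,\dots,\alpha^*_s$. Adding the two inequalities, dividing by $|I|$, and abbreviating $W := \big(\sum_{t=r+1}^s |\alpha^*_t-\alpha^*_{t-1}|+1\big)/|I|$, I obtain
\begin{equation*}
\frac{1}{|I|}\sum_{t=r}^s \mme[\ell(\beta_t,\alpha_t)] - \frac{1}{|I|}\sum_{t=r}^s \ell(\beta_t,\alpha^*_t) \le \frac{\log(k/\sigma)+2\sigma|I|}{\eta|I|} + \frac{\eta}{|I|}\sum_{t=r}^s \mme[\ell(\beta_t,\alpha_t)^2] + \frac{3(1+\gamma_i)^2}{2\gamma_i}W + \frac{\gamma_i}{2}.
\end{equation*}
Since $(1+\gamma_i)^2 \le (1+\gamma_{\textnormal{max}})^2$ and $(1+\gamma_{\textnormal{max}})^2 \ge 1$, the final two terms are at most $(1+\gamma_{\textnormal{max}})^2 f(\gamma_i)$, where $f(\gamma):=\tfrac{3W}{2\gamma}+\tfrac{\gamma}{2}$. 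The first two terms on the right already match the target bound, and the left-hand side is independent of $i$, so the inequality holds for the index minimizing $f(\gamma_i)$. The theorem therefore reduces to the purely deterministic estimate $\min_{1\le i \le k} f(\gamma_i) \le 2\sqrt{3}\,\max\{\sqrt{W},\gamma_1\}$.

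To control this minimum I would first record the elementary bound $W \le 1$: because each $\alpha^*_t \in [0,1]$, every increment satisfies $|\alpha^*_t-\alpha^*_{t-1}|\le 1$, so the total variation is at most $|I|-1$ and the numerator of $W$ is at most $|I|$. The unconstrained minimizer of $f$ is $\gamma^\star = \sqrt{3W}$ with $f(\gamma^\star)=\sqrt{3W}$, and the task is to show some grid point comes within a constant factor. I would split into three cases according to where $\gamma^\star$ falls relative to the grid. If $\gamma^\star < \gamma_1$, take $i=1$ and use $3W<\gamma_1^2$ to get $f(\gamma_1)<\gamma_1$. If instead $\gamma_1 \le \gamma^\star$ and the largest grid point below $\gamma^\star$ is not $\gamma_k$, then the ratio hypothesis $\gamma_{i+1}/\gamma_i \le 2$ forces that grid point into $[\gamma^\star/2,\gamma^\star]$, which yields $f(\gamma_i)\le \tfrac32\sqrt{3W}$ by bounding each term of $f$ separately.

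The remaining, and genuinely delicate, case is $\gamma_k \le \gamma^\star$, i.e.\ the optimal step size exceeds the entire grid; here I would select $i=k$ and bound the two summands independently. The term $\gamma_k/2 \le \sqrt{3W}/2$ follows at once from $\gamma_k \le \gamma^\star$, whereas $3W/(2\gamma_k) \le 3W/2 \le \tfrac32\sqrt{W}$ requires both the lower bound $\gamma_k \ge \sqrt{1+1/|I|}\ge 1$ and the fact that $W\le 1$ (so that $W \le \sqrt{W}$). Combining, $f(\gamma_k) \le \tfrac{3+\sqrt3}{2}\sqrt{W}$. Each case produces a bound of the form $\tfrac{3\sqrt3}{2}\sqrt W$, $\tfrac{3+\sqrt3}{2}\sqrt W$, or $\gamma_1$, all of which are dominated by $2\sqrt3\,\max\{\sqrt W,\gamma_1\}$, completing the argument. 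I expect this third case to be the main obstacle: it is precisely the scenario that motivates the hypothesis $\gamma_k \ge \sqrt{1+1/|I|}$, since without a lower bound on the largest step size the term $3W/(2\gamma_k)$ cannot be controlled, and it is where the boundedness $\alpha^*_t\in[0,1]$ enters in an essential way through $W\le 1$.
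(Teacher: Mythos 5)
Your proof is correct and follows essentially the same route as the paper's: chain Lemma~\ref{lem:adapt_reg_bound} with Lemma~\ref{lem:reg_ogd}, then use the doubling structure of the grid together with $\gamma_k \ge \sqrt{1+1/|I|}$ and $\alpha^*_t \in [0,1]$ to find a candidate $\gamma_i$ within a constant factor of the optimal step size. The only difference is bookkeeping --- you run a three-way case split around the unconstrained minimizer $\sqrt{3W}$ while the paper splits on whether $\sqrt{W}\ge\gamma_1$ and locates a grid point in $[\sqrt{W},2\sqrt{W}]$ --- and your constants all land safely under $2\sqrt{3}(1+\gamma_{\textnormal{max}})^2$.
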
 

If we assume that $\gamma_1 \leq \sqrt{ \frac{ \sum_{t=r+1}^s |\alpha^*_t - \alpha^*_{t-1}|+1}{|I|}}$ and take the optimal choices $\sigma = 1/(2|I|)$ and $\eta = \sqrt{\frac{\log(k\cdot |I|) + 2}{\sum_{t=r}^s \mme[\ell(\beta_t,\alpha_t)^2]}}$, we obtain the much simpler bound,
\begin{align*}
 \frac{1}{|I|} \sum_{t=r}^s \mme[\ell(\beta_t,\alpha_t)] - \frac{1}{|I|} \sum_{t=r}^s \ell(\beta_t,\alpha_t^*) & \leq 2 \sqrt{\frac{\log(k\cdot |I|) + 2}{|I|}} \sqrt{\frac{1}{|I|} \sum_{t=r}^s \mme[\ell(\beta_t,\alpha_t)^2]}\\
 & \ \ \ \ +  4(1+\gamma_{\text{max}})^2\sqrt{ \frac{ \sum_{t=r+1}^s |\alpha^*_t - \alpha^*_{t-1}|+1}{|I|}} \\
 & = O\left(\sqrt{\frac{\log(|I|)}{|I|}} \right) + O\left(\sqrt{ \frac{ \sum_{t=r+1}^s |\alpha^*_t - \alpha^*_{t-1}|}{|I|}} \right).
\end{align*}

The quantity $\frac{ \sum_{t=r+1}^s |\alpha^*_t - \alpha^*_{t-1}|}{|I|}$ can be viewed as a one-dimensional quantification of the size of the distribution shift in the environment. Thus, Theorem \ref{thm:dynamic_reg} gives a direct control on the average peformance of DtACI in terms of the distribution shift. We emphasize that this result holds over \textit{any} interval $|I|$ of a fixed length, justifying our earlier claim that DtACI is able to adapt to the distribution shift locally over all time steps. 

Unfortunately, the values for $\sigma$ and $\eta$ specified above are not usable in practice since they depend on both the size of the time interval $|I|$ and the non-constant value $\sum_{t=r}^s \mme[\ell(\beta_t,\alpha_t)]^2$. For this first issue, the user can pick any interval size of interest,  with the consideration that choosing larger intervals gives a tighter bound at the cost of weaker local guarantees. In our experiments, we will set $\eta$ and $\sigma$ using the choice $|I| = 500$.

\sloppy
For the second issue, we give two options. The first, is to note that in the idealized setting, where there is no distribution shift, we would have $\beta_t \sim \text{Unif}(0,1)$ and $\alpha^i_t \cong \alpha^*_t = \alpha$. Plugging in these approximations we obtain 
\[
\frac{1}{|I|}\sum_{t=r}^s \mme[\ell(\beta_t,\alpha_t)^2] \cong \mme_{\beta \sim \text{Unif(0,1)}}[\ell(\beta,\alpha)^2] = \frac{(1-\alpha)^2 \alpha^2 }{3},
\]
and substituting this value into the expression for $\eta$ above gives the choice $\eta = \sqrt{\frac{3}{500}}\sqrt{\frac{\log(k\cdot 500) + 2}{(1-\alpha)^2 \alpha^2}}$. 

Our second option, is to simply update $\eta$ in an online fashion through the equation
\[
 \eta = \eta_t := \sqrt{\frac{\log(k\cdot 500) + 2}{\sum_{s=t-501}^{t} \mme[\ell(\beta_s,\alpha_s)^2]}}.
 \] 
This choice would allow us to adaptively track any changes in $\frac{1}{500} \sum_{s=t-501}^{t} \mme[\ell(\beta_s,\alpha_s)^2]$ across time. In the Appendix, we prove a generalization of Theorem \ref{thm:dynamic_reg} that allows $\eta = \eta_t$ to vary across time. We find that a dynamic choice of $\eta_t$ offers the same regret guarantees as a fixed choice so long as the variability in $\eta_t$ is not too large. Hence, adaptive values for $\eta_t$ can be used to minimize the regret bound of Theorem \ref{thm:dynamic_reg}. On the other hand, empirically, we find that on real data the approximation $\frac{1}{|I|}\sum_{t=r}^s \mme[\ell(\beta_t,\alpha_t)^2] \cong \frac{(1-\alpha)^2 \alpha^2}{3}$ is highly accurate. Thus, the two different choices for $\eta$ give nearly identical results in practice. For ease of presentation in the sections that follow, we will only display results using the first fixed, heuristic choice of $\eta$. Results for the variable choice are given in the Appendix.

\subsection{Bounds on the short-term coverage}\label{sec:coverage_bounds}

The previous section gives bounds on the performance of $\alpha_t$ in terms of the pinball loss $\ell(\beta_t,\alpha_t)$. However, the pinball loss is not our true objective and our primary goal is to obtain a value of $\alpha_t$ that is close to $\alpha^*_t$. Our next result provides a direct connection between bounds on $\ell(\beta_t,\alpha_t)$ and bounds on $(\alpha_t - \alpha^*_t)^2$.

\begin{proposition}\label{prop:loss_connection}
Let $\beta$ be a random variable and assume that there exists a value $\alpha^*$ such that $\mmp(\beta < \alpha^*) = \alpha$. Then, for any $\tau$,
\begin{align*}
& \mme[\ell(\beta,\tau)] - \mme[\ell(\beta,\alpha^*)] = \begin{cases}
\mme[(\tau - \beta) \bone_{\alpha^* < \beta \leq \tau}] \text{, if } \tau \geq \alpha^*,\\
\mme[(\beta - \tau) \bone_{\tau < \beta \leq \alpha^*}] \text{, if } \tau < \alpha^*.
\end{cases}
\end{align*}
So, in particular, if $\beta$ has a density $p(\cdot)$ on $[0,1]$ with $p(x) \geq p > 0$ for all $x \in [0,1]$, then
\[
\mme[\ell(\beta,\tau)] - \mme[\ell(\beta,\alpha^*)] \geq \frac{p(\tau - \alpha^*)^2}{2}.
\]
\end{proposition}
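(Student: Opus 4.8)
The plan is to reduce both parts to a single one-dimensional integral of the distribution function of $\beta$. First I would rewrite the pinball loss in the more convenient form $\ell(\beta,\theta) = \alpha(\beta-\theta) + (\theta-\beta)_+$, which follows immediately from the identity $-\min\{0,\beta-\theta\} = (\theta-\beta)_+$. Differencing at $\theta=\tau$ and $\theta=\alpha^*$ and using linearity of expectation gives
\[
\mme[\ell(\beta,\tau)] - \mme[\ell(\beta,\alpha^*)] = \alpha(\alpha^*-\tau) + \mme[(\tau-\beta)_+] - \mme[(\alpha^*-\beta)_+].
\]
The key tool is the layer-cake identity $(c-\beta)_+ = \int_{-\infty}^{c} \bone\{\beta<u\}\,du$, which after Tonelli yields $\mme[(c-\beta)_+] = \int_{-\infty}^{c} \mmp(\beta<u)\,du$. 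Applying this with $c=\tau$ and $c=\alpha^*$ and writing $\alpha(\alpha^*-\tau) = -\int_{\alpha^*}^{\tau}\alpha\,du$ collapses the display above (in the case $\tau\ge\alpha^*$) into
\[
\mme[\ell(\beta,\tau)] - \mme[\ell(\beta,\alpha^*)] = \int_{\alpha^*}^{\tau} \big(\mmp(\beta<u)-\alpha\big)\,du.
\]

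Next I would substitute the hypothesis $\alpha=\mmp(\beta<\alpha^*)$, so that for $u\ge\alpha^*$ the integrand becomes $\mmp(\beta<u)-\mmp(\beta<\alpha^*)=\mmp(\alpha^*\le\beta<u)$, and then apply Tonelli once more to recognize $\int_{\alpha^*}^{\tau}\mmp(\alpha^*\le\beta<u)\,du = \mme[(\tau-\beta)\bone\{\alpha^*\le\beta<\tau\}]$, which is the first branch of the claimed identity. The case $\tau<\alpha^*$ is entirely symmetric: the same integral now runs from $\tau$ to $\alpha^*$ with integrand $-\mmp(u\le\beta<\alpha^*)$, giving $\mme[(\beta-\tau)\bone\{\tau\le\beta<\alpha^*\}]$, the second branch. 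For the lower bound I would work from the intermediate CDF-integral form. Assuming $\tau,\alpha^*\in[0,1]$, so that $p(x)\ge p$ is in force over the range of integration, I have for $u\in[\alpha^*,\tau]$ the bound $\mmp(\alpha^*\le\beta<u)=\int_{\alpha^*}^{u}p(x)\,dx\ge p\,(u-\alpha^*)$; integrating gives $\int_{\alpha^*}^{\tau}p\,(u-\alpha^*)\,du=\tfrac{p}{2}(\tau-\alpha^*)^2$, and the symmetric computation covers $\tau<\alpha^*$, so in both cases the difference is at least $\tfrac{p}{2}(\tau-\alpha^*)^2$ (here the left side of the proposition's second part is the same difference as in the first part, with arguments written in the opposite order).

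I do not expect a genuine obstacle here; the computation is routine and the two points demanding attention are bookkeeping rather than substance. The first is the \emph{boundary convention}: the derivation naturally produces the indicator $\bone\{\alpha^*\le\beta<\tau\}$, which differs from the stated $\bone\{\alpha^*<\beta\le\tau\}$ only through the atom $\{\beta=\alpha^*\}$, a discrepancy that disappears whenever $\mmp(\beta=\alpha^*)=0$, and in particular under the density hypothesis invoked for the lower bound. The second is that the lower bound genuinely requires $\tau,\alpha^*\in[0,1]$, so that the integration range stays inside the region where $p(x)\ge p$; this is the hypothesis I would make explicit. A variant route would observe that $g(\theta):=\mme[\ell(\beta,\theta)]$ is convex and $1$-Lipschitz, hence absolutely continuous with a.e.\ derivative $\mmp(\beta<\theta)-\alpha$, and then invoke the fundamental theorem of calculus to reach the same CDF-integral; I slightly prefer the Tonelli argument above since it avoids justifying differentiation through the expectation at the kink of the loss.
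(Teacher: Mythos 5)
Your proof is correct, and it takes a genuinely different (though equally elementary) route from the paper's. The paper expands the pinball loss directly into its two indicator pieces at $\tau$ and at $\alpha^*$, splits the event $\{\beta<\tau\}$ into $\{\beta<\alpha^*\}\cup\{\alpha^*\le\beta<\tau\}$, and cancels the linear terms using $\mmp(\beta<\alpha^*)=\alpha$ and $\mmp(\beta\ge\alpha^*)=1-\alpha$, arriving at $\mme[(\tau-\beta)\bone_{\alpha^*\le\beta<\tau}]$; it then lower-bounds that expectation directly by $\int_0^{\tau-\alpha^*}xp\,dx$. You instead pass through the layer-cake identity to reach the intermediate form $\int_{\alpha^*}^{\tau}\bigl(\mmp(\beta<u)-\alpha\bigr)\,du$, and only then convert back to an expectation. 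The two computations are the same in substance, but your route buys a cleaner intermediate object --- the integrated CDF form makes it transparent that the excess loss is the integral of $\mmp(\beta<u)-\mmp(\beta<\alpha^*)$ over the gap, and the quadratic lower bound falls out of $\mmp(\alpha^*\le\beta<u)\ge p(u-\alpha^*)$ without needing to reason about the expectation of $(\tau-\beta)$ on a slab. It also sidesteps the indicator bookkeeping where the paper's displayed chain contains a typo ($\alpha^*+\epsilon$ where $\tau$ is meant). Your two caveats are both well taken and apply equally to the paper's own argument: the derivation naturally yields the indicator $\bone_{\alpha^*\le\beta<\tau}$ rather than the stated $\bone_{\alpha^*<\beta\le\tau}$ (immaterial when $\beta$ has a density, which is the only case where the second claim is invoked), and the lower bound does implicitly require $\tau,\alpha^*\in[0,1]$ so that the density bound $p(x)\ge p$ covers the integration range --- a hypothesis the paper leaves tacit.
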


Now, let $\alpha^*_t$ be any value satisfying $\mmp(Y_{t} \in \hat{C}_t(\alpha^*_t) | \{\beta_s\}_{s<t}) = 1-\alpha$. Then, combining Proposition \ref{prop:loss_connection} with the results from Section \ref{sec:reg_bounds} we obtain the desired bound on $(\alpha_t - \alpha^*_t)^2$,
\begin{equation}\label{eq:final_reg_bound}
 \frac{1}{|I|} \sum_{t=r}^s \frac{p\mme[(\alpha_t - \alpha_t^*)^2]}{2} \leq O\left(\sqrt{\frac{\log(|I|)}{|I|}} \right) + O\left(\sqrt{ \frac{ \sum_{t=r+1}^s \mme[|\alpha^*_t - \alpha^*_{t-1}|]}{|I|}} \right) ,
\end{equation}
where the expectation is now over the randomness in both Algorithm \ref{alg:faci} and $\{\beta_t\}_{t \leq s}$, and $p$ is any lower bound on the density of $\beta_t, | \{\beta_r\}_{r < s}$, $\forall t \leq s$. Similarly, if we additionally assume that $\beta \mapsto \mmp(Y_{t} \in \hat{C}_t(\beta) | \{\beta_s\}_{s<t})$ is $L$-Lipschitz, then $|\mmp(Y_{t} \in \hat{C}_t(\beta) | \{\beta_s\}_{s<t}) -(1-\alpha)| \leq L|\alpha_t-\alpha^*_t|$ and thus (\ref{eq:final_reg_bound}) can also be read as a bound on the local coverage of DtACI. Such a Lipschitz assumption may be reasonable if the distribution of the conformity scores and our estimates of its quantiles are sufficiently smooth.

In \cite{Gibbs2021}, the authors showed that adaptive conformal inference satisfies a similar bound to (\ref{eq:final_reg_bound}). However, their result requires three major assumptions: 1) the size of the distribution shift $\frac{ \sum_{t=r+1}^s |\alpha^*_t - \alpha^*_{t-1}|}{|I|}$ is known, 2) $(X_t,Y_t)$ is generated by a hidden Markov model, and 3) the regression model is not re-fit across time. In stark contrast, our result makes no such assumptions and, in addition, is adaptive to changes in the size of the distribution shift across time. In practical settings, none of these three assumptions can be reasonably expected to hold and thus our results constitute a significant generalization of those in \cite{Gibbs2021}.

\subsection{Bounds on the long-term coverage}\label{sec:long_term_cov}

The results of the previous section show that DtACI obtains a coverage rate close to $1-\alpha$ over any local time interval. It is natural to ask if elongating this interval leads to an average coverage of exactly $1-\alpha$. Here, we show that this is indeed the case if the parameters $\eta = \eta_t \to 0$ and $\sigma = \sigma_t \to 0$. At a high-level, sending $\eta$ and $\sigma$ to 0 causes the method to put more weight on older historical data and thus gives a version of DtACI that is closer to AgACI and MVP. Prior work has shown that MVP also obtains exact long-term coverage (\cite{Bastani2022}), while for AgACI, this property has only been observed empirically (\cite{Zaffran2022}).

\begin{theorem}\label{thm:long_term_coverage}
Consider a modified version of Algorithm 1 in which on iteration $t$ the parameters $\eta$ and $\sigma$ are replaced by values $\eta_t$ and $\sigma_t$. Let $\gamma_{\textnormal{min}} := \min_i \gamma_i$ and $\gamma_{\textnormal{max}} := \max_i \gamma_i$. Then,
\[
\left| \frac{1}{T} \sum_{t=1}^T\mme[\textup{err}_t] - \alpha \right| \leq \frac{1+2\gamma_{\textnormal{max}}}{T\gamma_{\textnormal{min}}} +  \frac{(1+2\gamma_{\textnormal{max}})^2}{\gamma_{\textnormal{min}}}  \frac{1}{T} \sum_{t=1}^T \eta_te^{\eta_t(1+2\gamma_{\textnormal{max}})}+ 2  \frac{1+\gamma_{\textnormal{max}}}{\gamma_{\textnormal{min}}}\frac{1}{T} \sum_{t=1}^T \sigma_t,
\] 
where the expectation is over the randomness in Algorithm \ref{alg:faci} and the data $\beta_1,\dots,\beta_T$ can be viewed as fixed. So, in particular, if $\lim_{t \to \infty} \eta_t = \lim_{t \to \infty} \sigma_t = 0$, then $\lim_{T \to \infty} \frac{1}{T} \sum_{t=1}^T\textup{err}_t \stackrel{a.s.}{=} \alpha$.
\end{theorem}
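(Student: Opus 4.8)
The plan is to bound the cumulative miscoverage $\frac{1}{T}\sum_{t=1}^T(\alpha-\textup{err}_t)$ by first conditioning on the internal randomness of the algorithm and then exploiting the telescoping structure of the ACI updates run by the individual experts. Let $\mcf_t$ denote the $\sigma$-algebra generated by the expert selections made before time $t$. Since the only fresh randomness at step $t$ is the index drawn with probabilities $p_t^i$, we have $\mme[\textup{err}_t\mid\mcf_t]=\sum_{i=1}^k p_t^i\,\textup{err}_t^i$, so it suffices to control $\frac{1}{T}\sum_{t=1}^T\sum_i p_t^i(\alpha-\textup{err}_t^i)$ pathwise and then take expectations. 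The key identity is that each expert runs a gradient step, $\alpha_{t+1}^i-\alpha_t^i=\gamma_i(\alpha-\textup{err}_t^i)$, so that $\alpha-\textup{err}_t^i=\gamma_i^{-1}(\alpha_{t+1}^i-\alpha_t^i)$. I would first record the deterministic a priori bound $\alpha_t^i\in[-\gamma_{\textnormal{max}},1+\gamma_{\textnormal{max}}]$: whenever $\alpha_t^i<0$ the set $\hat{C}_t(\alpha_t^i)$ is all of $\mmr$ so $\textup{err}_t^i=0$ and the parameter increases, and whenever $\alpha_t^i>1$ the set is empty so $\textup{err}_t^i=1$ and it decreases, confining every expert (and hence the selected iterate) to an interval of width $1+2\gamma_{\textnormal{max}}$. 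The shared re-anchoring moreover keeps all the $\alpha_t^i$ within $2\gamma_{\textnormal{max}}$ of one another, which lets us pass freely between the selected iterate and the individual experts.

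Second, I would substitute the increment identity, write $q_t^i:=p_t^i/\gamma_i$, and apply summation by parts to peel off a genuine telescope:
\[
\sum_{t=1}^T\sum_i p_t^i(\alpha-\textup{err}_t^i)=\sum_{t=1}^T\sum_i q_t^i(\alpha_{t+1}^i-\alpha_t^i)=\underbrace{\Big[\sum_i q_T^i\alpha_{T+1}^i-\sum_i q_1^i\alpha_1^i\Big]}_{\textnormal{boundary}}+\sum_{t=2}^T\sum_i(q_{t-1}^i-q_t^i)\alpha_t^i.
\]
Because $\sum_i q_t^i\le 1/\gamma_{\textnormal{min}}$ and $|\alpha_t^i|\le 1+\gamma_{\textnormal{max}}$, the boundary term is bounded by $\tfrac{1+2\gamma_{\textnormal{max}}}{\gamma_{\textnormal{min}}}$ (a sharper accounting of the two endpoints gives the constant in the statement); dividing by $T$ produces the first term $\frac{1+2\gamma_{\textnormal{max}}}{T\gamma_{\textnormal{min}}}$. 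Everything else sits in the residual $\sum_{t,i}(q_{t-1}^i-q_t^i)\alpha_t^i$, which is controlled by the variation of the mixture weights via $\big|\sum_i(q_{t-1}^i-q_t^i)\alpha_t^i\big|\le\frac{1+\gamma_{\textnormal{max}}}{\gamma_{\textnormal{min}}}\sum_i|p_t^i-p_{t-1}^i|$.

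The main work, and the step I expect to be the real obstacle, is bounding the per-step weight variation $\sum_i|p_t^i-p_{t-1}^i|$ in terms of $\eta_t$ and $\sigma_t$. Using that the normalization satisfies $W_{t+1}=\bar W_t$, the update collapses to $p_{t+1}^i=(1-\sigma_t)\bar p_t^i+\sigma_t/k$ with $\bar p_t^i\propto p_t^i\exp(-\eta_t\ell(\beta_t,\alpha_t^i))$. The mixing step contributes $\sum_i|p_{t+1}^i-\bar p_t^i|\le 2\sigma_t$, which after the $\frac{1+\gamma_{\textnormal{max}}}{\gamma_{\textnormal{min}}}$ prefactor yields the $\sigma_t$ term exactly. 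For the exponential tilt I would bound $\sum_i|\bar p_t^i-p_t^i|$ by a first-order estimate of the softmax: since the losses lie in an interval of length at most $1+2\gamma_{\textnormal{max}}$, one gets $\sum_i|\bar p_t^i-p_t^i|\lesssim\eta_t(1+2\gamma_{\textnormal{max}})\exp(\eta_t(1+2\gamma_{\textnormal{max}}))$, the exponential factor arising from lower-bounding the partition function $\sum_j p_t^j e^{-\eta_t\ell(\beta_t,\alpha_t^j)}$. Combining the residual prefactor with this estimate and bounding $1+\gamma_{\textnormal{max}}\le 1+2\gamma_{\textnormal{max}}$ reproduces the middle term with its $(1+2\gamma_{\textnormal{max}})^2$ and exponential factors. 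Summing over $t$, dividing by $T$, and taking expectations then yields the displayed inequality.

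Finally, for the almost sure conclusion I would observe that the entire bound above is pathwise in the realized losses, so that when $\eta_t\to0$ and $\sigma_t\to0$ the Ces\`aro averages $\frac1T\sum_t\eta_t$ and $\frac1T\sum_t\sigma_t$ vanish and hence $\frac1T\sum_t\sum_i p_t^i(\alpha-\textup{err}_t^i)\to0$ deterministically. It remains to replace the conditional average $\sum_i p_t^i\textup{err}_t^i$ by $\textup{err}_t$: the differences $D_t:=\textup{err}_t-\sum_i p_t^i\textup{err}_t^i$ form a bounded martingale difference sequence with respect to $\{\mcf_t\}$, so by the strong law for martingales (via Azuma--Hoeffding and Borel--Cantelli) $\frac1T\sum_{t=1}^T D_t\to0$ almost surely. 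Adding the two limits gives $\frac1T\sum_{t=1}^T\textup{err}_t\stackrel{a.s.}{=}\alpha$.
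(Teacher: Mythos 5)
Your proposal is correct and follows essentially the same route as the paper: the increment identity $\alpha-\textup{err}_t^i=\gamma_i^{-1}(\alpha_{t+1}^i-\alpha_t^i)$, the telescoping of $\sum_i p_t^i\alpha_t^i/\gamma_i$ (your Abel summation is the paper's per-step identity summed over $t$), the split of the weight variation into the exponential tilt ($\eta_t(1+2\gamma_{\textnormal{max}})\exp(\eta_t(1+2\gamma_{\textnormal{max}}))$) and the mixing step ($2\sigma_t$), and the a priori bound $\alpha_t^i\in[-\gamma_i,1+\gamma_i]$. Your final step is in fact slightly more careful than the paper's, which invokes the law of large numbers without spelling out the martingale SLLN argument you give via Azuma--Hoeffding and Borel--Cantelli.
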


Following the discussion of the previous sections, decaying values of $\eta_t$ and $\sigma_t$ should only be used if the size of the distribution shift in the environment is known to be stationary. Since we do not consider this to be a realistic assumption in most situations, we advocate for using constant or slowly varying values for $\eta$ and $\sigma$. For these choices, we empirically find that DtACI can produce intervals that are biased in the sense that $\lim_{T \to \infty} \frac{1}{T} \sum_{t=1}^T \text{err}_t \neq \alpha$. However, in all the examples we have investigated, this bias is sufficiently small to be of little practical consequence.

\subsection{Removing randomness in the choice of $\pmb{\alpha_t}$}

In practical settings, the randomness in the choice of $\alpha_t $ may be undesirable. To rectify this, we provide an alternative approach that replaces the choice $\alpha_t \sim \sum_{i =1}^k p_t^i\delta_{\alpha_t^i}$ with $\bar{\alpha}_t =  \sum_{i =1}^k p_t^i\alpha_t^i$.  The full version of this procedure is stated in Algorithm \ref{alg:faci_avg}. Importantly, this new method admits the same regret bound as our original procedure.  

\begin{corollary}
Under the same conditions, the conclusion of Theorem \ref{thm:dynamic_reg} holds with $\alpha_t$ replaced by $\bar{\alpha_t}$ on the left-hand side of the display.
\end{corollary}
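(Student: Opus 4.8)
The plan is to re-run the argument behind Theorem~\ref{thm:dynamic_reg} for the averaged algorithm, replacing the single place where the randomized selection enters by an application of Jensen's inequality. The key structural fact is that the pinball loss $\theta \mapsto \ell(\beta_t,\theta) = \alpha\beta_t - \alpha\theta + (\theta-\beta_t)_+$ is convex and nonnegative. Since $\bar{\alpha}_t = \sum_{i=1}^k p_t^i \alpha_t^i$ is a convex combination of the experts' iterates, Jensen's inequality gives, pointwise in $t$,
\[
\ell(\beta_t,\bar{\alpha}_t) \;\leq\; \sum_{i=1}^k p_t^i\, \ell(\beta_t,\alpha_t^i),
\]
and, using that the square of a nonnegative convex function is convex, also $\ell(\beta_t,\bar{\alpha}_t)^2 \leq \sum_{i=1}^k p_t^i \ell(\beta_t,\alpha_t^i)^2$. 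In the randomized algorithm the right-hand sides are exactly $\mme[\ell(\beta_t,\alpha_t)\mid \text{past}]$ and $\mme[\ell(\beta_t,\alpha_t)^2\mid\text{past}]$, so the deterministic output is never worse than the conditional expected output of the randomized scheme, which is precisely the quantity controlled in Theorem~\ref{thm:dynamic_reg}.

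First I would verify that the two ingredient lemmas transfer to the averaged algorithm, in which the shared base point of every expert update becomes $\alpha_{t+1}^i = \bar{\alpha}_t + \gamma_i(\alpha - \text{err}_t^i)$. Lemma~\ref{lem:adapt_reg_bound} is a pathwise statement about the exponential-weights/fixed-share aggregation: its proof bounds the weighted cumulative loss $\sum_{t=r}^s \sum_i p_t^i \ell(\beta_t,\alpha_t^i)$ for an arbitrary bounded sequence of expert predictions and never uses how the $\alpha_t^i$ are generated, so it applies unchanged, with $\mme[\ell(\beta_t,\alpha_t)]$ read as $\sum_i p_t^i\ell(\beta_t,\alpha_t^i)$ and the second-order term as $\sum_i p_t^i \ell(\beta_t,\alpha_t^i)^2$. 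Lemma~\ref{lem:reg_ogd} is a single-expert online-gradient-descent bound whose proof uses only that $|\nabla_\theta \ell(\beta_t,\alpha_t^i)|\leq 1$ and that the base point lies in the convex hull of $\{\alpha_t^j\}_j$, so that $|\,\bar{\alpha}_t - \alpha_t^i\,| \leq \max_j|\alpha_t^j-\alpha_t^i|$, exactly the bound available for the random base $\alpha_t$. Since $\bar{\alpha}_t$ satisfies both properties with the same constants, Lemma~\ref{lem:reg_ogd} holds with $\bar{\alpha}_t$ as the base point and identical right-hand side.

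With both lemmas in hand the proof proceeds as for Theorem~\ref{thm:dynamic_reg}: start from $\sum_{t=r}^s \ell(\beta_t,\bar{\alpha}_t) \leq \sum_{t=r}^s \sum_i p_t^i \ell(\beta_t,\alpha_t^i)$ by Jensen, chain the deterministic form of Lemma~\ref{lem:adapt_reg_bound} to pass to a single expert $i$, then apply Lemma~\ref{lem:reg_ogd} and optimize over $i$ under the assumptions $\gamma_{i+1}/\gamma_i\leq 2$, $\gamma_k\geq\sqrt{1+1/|I|}$, and $\sigma\leq 1/2$. This reproduces the stated bound with $\alpha_t$ replaced by $\bar{\alpha}_t$, where the second-order term is the weighted second moment $\sum_i p_t^i \ell(\beta_t,\alpha_t^i)^2$ (equivalently the original $\mme[\ell(\beta_t,\alpha_t)^2]$); by convexity of $\ell(\beta_t,\cdot)^2$ this dominates $\ell(\beta_t,\bar{\alpha}_t)^2$, so the resulting bound is at least as strong as the literal substitution and yields the same simplified $O(\sqrt{\log(|I|)/|I|}) + O(\sqrt{\sum_{t}|\alpha^*_t - \alpha^*_{t-1}|/|I|})$ rate.

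I expect the main obstacle to be the bookkeeping around the second-order term and the base-point change in Lemma~\ref{lem:reg_ogd}. Because the averaged and randomized algorithms produce genuinely different iterate trajectories, one cannot simply transport the realized bound of Theorem~\ref{thm:dynamic_reg}; the honest route is to confirm that each ingredient lemma is proved pathwise and depends on the base point only through its boundedness and its distance to the experts, both of which are preserved — and, in the variance term, even improved — upon replacing the random draw $\alpha_t$ by its conditional mean $\bar{\alpha}_t$.
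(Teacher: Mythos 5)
Your proposal is correct and rests on exactly the same idea as the paper's one-line proof: Jensen's inequality applied to the convex pinball loss, giving $\ell(\beta_t,\bar{\alpha}_t) \leq \sum_i p_t^i \ell(\beta_t,\alpha_t^i) = \mme[\ell(\beta_t,\alpha_t)\mid \text{past}]$, so the bound of Theorem \ref{thm:dynamic_reg} transfers. The additional bookkeeping you carry out (checking that Lemmas \ref{lem:adapt_reg_bound} and \ref{lem:reg_ogd} are pathwise and depend on the base point only through boundedness) is a careful spelling-out of what the paper leaves implicit, not a different argument.
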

\begin{proof}
This is an immediate consequence of Jensen's inequality.
\end{proof}

\begin{algorithm}
 \KwData{Observed values $\{\beta_t\}_{1 \leq t \leq T}$, set of candidate $\gamma$ values $\{\gamma_i\}_{1 \leq i \leq k}$, starting points $\{\alpha_1^i\}_{1 \leq i \leq k}$, and parameters $\sigma$ and $\eta$.}
 $w^i_1 \leftarrow 1,\ 1 \leq i \leq \in k$\;
 \For{$t=1,2,\dots,T$}{
 	Define the probabilities $p_t^i := w^i_t/\sum_{1 \leq j \leq k}w^{j}_t$, $\forall 1 \leq i \leq k$\;
 	Output $\bar{\alpha}_t =  \sum_{1 \leq i \leq k} p_t^i\alpha^i_t$\;
 	$\bar{w}^i_t \leftarrow w^i_t \exp(-\eta \ell(\beta_t,\alpha_t^i)),\ \forall1 \leq i \leq k$\;
 	$\bar{W}_t \leftarrow \sum_{1 \leq i \leq k} \bar{w}^i_t $\;
 	$w^i_{t+1} \leftarrow (1-\sigma) \bar{w}^i_t + \bar{W}_t \sigma/k$\;
 	$\text{err}^i_{t} := \bone\{Y_t \notin \hat{C}_t(\alpha^i_t)\}$, $\forall 1 \leq i \leq k$\;
 	$\text{err}_{t} := \bone\{Y_t \notin \hat{C}_t(\bar{\alpha}_t)\}$\;
 	$\alpha^i_{t+1} = \alpha^i_{t} + \gamma_i(\alpha - \text{err}^i_{t})$, $\forall 1 \leq i \leq k$\;
 }
 \caption{}
 \label{alg:faci_avg}
\end{algorithm}

Unsurprisingly, we find that in practice Algorithms \ref{alg:faci} and \ref{alg:faci_avg} produce nearly identical results. Thus, we will abuse terminology and also refer to Algorithm \ref{alg:faci_avg} as DtACI. For simplicity, the following sections show results only for Algorithm \ref{alg:faci_avg}.

\section{Empirical results}

In this section we investigate the performance of DtACI as well as the previously proposed AgACI and MVP methods. In all experiments the set of candidate $\gamma$ values is taken to be $\{$0.001, 0.002, 0.004, 0.008, 0.0160, 0.032, 0.064, 0.128$\}$. Code for reproducing  these results can be found at \url{https://github.com/isgibbs/DtACI}.

\subsection{Simulated examples}\label{sec:simulated_data}

We begin by considering a set of simulated examples in which we can exactly measure the local coverage properties of the methods. To make the results interpretable, we focus on a simple setting in which we observe a sequence of independent random variables $\{Y_t\}_{t=1}^T$, where $Y_t \sim \mathcal{N}(\mu_t,1)$, and form prediction sets using the standard normal distribution as our quantile estimator, i.e. we set
\[
\hat{C}_t(\alpha_t) := \left\{ y : y \leq \text{Quantile}(1-\alpha_t,\mathcal{N}(0,1)) \right\}.
\]
We consider three different choices for the sequence of means $\{\mu_t\}_{t=1}^T$:
\begin{itemize}
    \item A \textit{stationary} setting in which $\mu_t = 0$ is held constant and thus the data are i.i.d..
    \item A \textit{smooth} shift setting in which $\mu_t$ drifts continuously across time. More precisely we set $\mu_0=0$ and 
    \[
    \mu_{t+1} = \mu_t + \frac{1}{2}(\mu_t - \mu_{t-1}) + \frac{1}{2}\epsilon_t
    \]
    where $\{\epsilon_t\} \stackrel{i.i.d.}{\sim} \mathcal{N}(0,0.006)$.
    \item A setting with \textit{jump} shifts in which $\mu_t$ undergoes jump discontinuities of various sizes. In particular, we divide the time period into three equally sized intervals. In the first and third interval $\mu_t$ oscillates between -0.075 and 0.075, while in the second interval the distribution shifts are larger and $\mu_t$ oscillates between $-1.5$ and $1.5$. 
\end{itemize}

\begin{figure}[ht]
    \centering
    \includegraphics[scale=0.24]{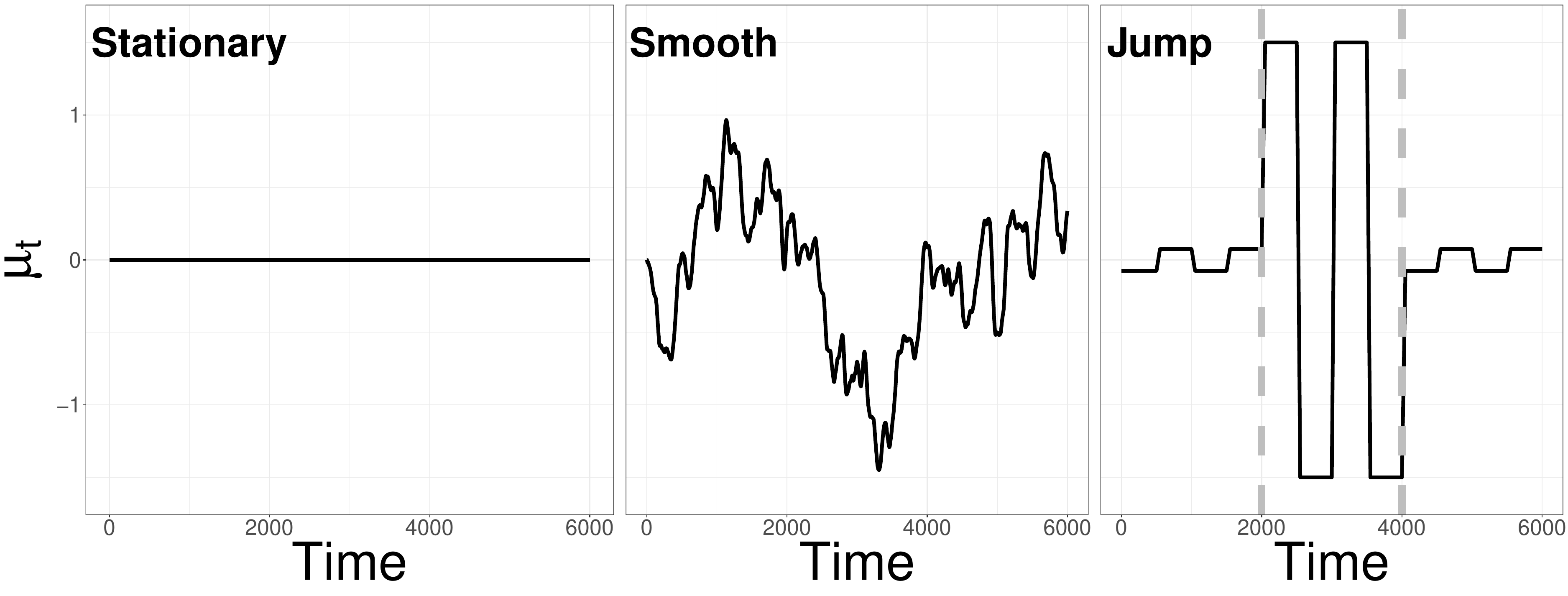}
        \caption{Trajectories for $\mu_t$ in the \textit{stationary}, \textit{smooth}, and \textit{jump} settings. To aid readability, the trajectory of $\mu_t$ in the center panel has been locally averaged over a moving time interval of width $50$. Vertical grey lines in the jump shift plot denote the regime switches where the size of the distribution shift changes.}

    \label{fig:simulated_mus_forreal}
\end{figure}

\begin{figure}[h]
    \centering
    \includegraphics[scale=0.24]{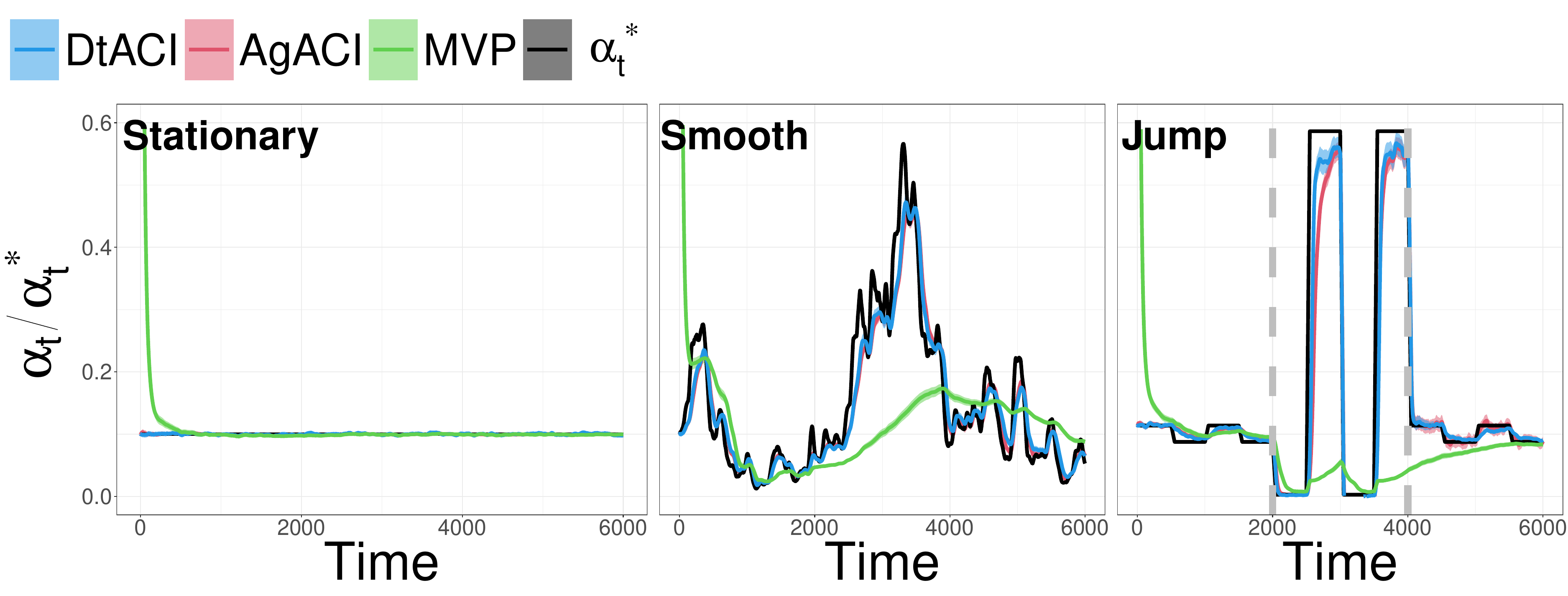}
        \caption{Comparison of the simulated trajectories of $\alpha_t^*$ (black) against the estimated values, $\alpha_t$ output by DtACI (blue), AgACI (red), and MVP (green). Solid lines display averages across 100 trials with $\mu_t$ (and thus $\alpha_t^*$) held fixed and $\alpha_t$ regenerated using independent draws of $\{Y_t\}$. Shaded regions show confidence intervals for the corresponding means. To aid readability, means and confidence intervals have been locally averaged over a moving time interval of width $50$. Finally, the vertical grey lines in the jump plot denote the regime switches where the size of the distribution shift changes.}

    \label{fig:simulated_mus}
\end{figure}

The final trajectories of $\mu_t$ generated in all three settings are shown in Figure \ref{fig:simulated_mus_forreal}. Figure \ref{fig:simulated_mus} shows the corresponding trajectories of $\alpha_t^*$ as well as the estimates, $\alpha_t$, produced by DtACI, AgACI, and MVP. We can immediately see that DtACI and AgACI accurately adapt to the local changes in $\alpha_t^*$ across all three settings, while MVP only performs well in the stationary environment.

\begin{figure}
    \centering
    \includegraphics[scale=0.18]{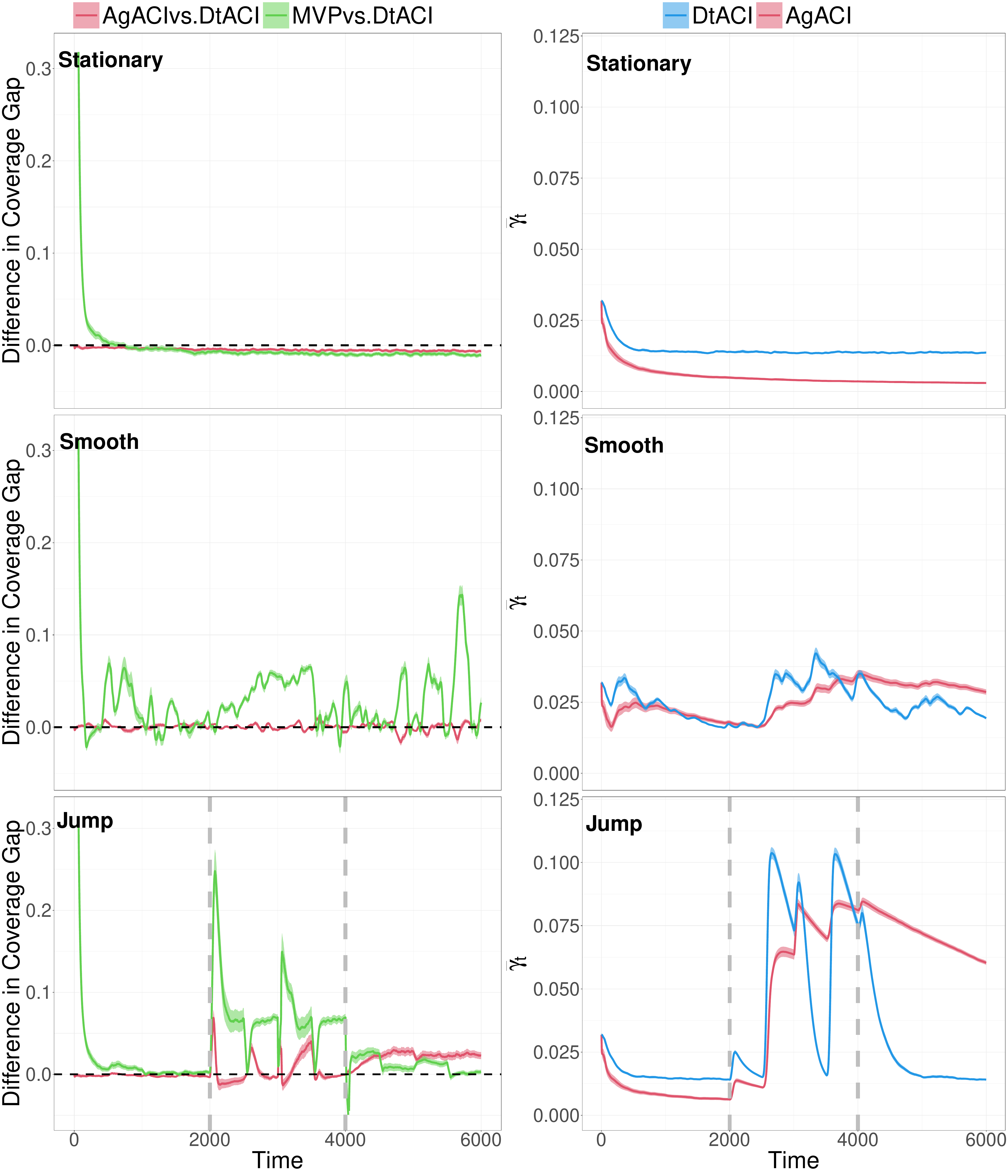}
    \caption{Comparison of the coverage gaps obtained by AgACI and MVP against the baseline of DtACI (left-panels) and the mean step-size trajectories output by DtACI and AgACI (right-panels). Solid lines display averages across 100 trials, while shaded regions show confidence intervals for the corresponding means. To aid readability, means and confidence intervals have been locally averaged over a moving time interval of width $50$. Finally, vertical grey lines in the jump shift plots denote the regime switches where the size of the distribution shift changes, while horizontal black lines in the coverage gap plots denote the value $0$ at which the performance of DtACI exactly matches the competing methods.}
    
    \label{fig:simulated_res}
\end{figure}

To get a more fine-grained comparison of the relative performance of DtACI, AgACI, and MVP we additionally compute the time-instantaneous coverages of these methods. Namely, let $\alpha_t^D$, $\alpha_t^A$, and $\alpha_t^M$ denote the values output by DtACI, AgACI, and MVP at time step $t$. Then, we compute the instantaneous coverage gaps $\text{CG}_x = |\mmp(Y_t \in \hat{C}(\alpha_t^x)|\alpha_t^x) - (1-\alpha)|$ for $x \in \{D,A,M\}$ and we plot the difference between the values obtained by AgACI and MVP and those obtained by DtACI (i.e. the values $\text{CG}_A - \text{CG}_D$ and $\text{CG}_M - \text{CG}_D$). Thus, in the results that follow a value of $0$ indicates identical performance, while positive/negative values indicate that DtACI performs better/worse than the competitors.

The resulting coverage performances are shown in Figure \ref{fig:simulated_res}. Overall, we find that DtACI offers greater adaptivity and more precise coverage than AgACI and MVP in the non-stationary settings, while suffering only a slight degradation in performance under stationarity. More specifically, our results for each of the three settings can be summarized as follows.
\begin{itemize}
    \item \textit{Stationary setting:} In the stationary setting MVP and AgACI slightly outperform DtACI. This is due to the fact that MVP and AgACI are able to more precisely converge to the single optimal value for $\alpha_t^*$, while DtACI can never set its step-size to exactly 0 and thus maintains some minor fluctionations around $\alpha_t^*$ across all time steps.
    \item \textit{Smooth shift setting:} For non-stationary data MVP now gives significantly worse performance and shows little adaptivity to the distribution shifts. On the other hand, both AgACI and DtACI perform well and are able to approximately track the changes in $\alpha_t^*$ over time. Overall, the results for AgACI and DtACI are nearly identical, which is expected since a single choice of the step-size is sufficient to give good performance in this environment.
    \item \textit{Sharp shift setting:} Once again MVP fails to adapt to the distribution shifts. Additionally, while AgACI does show reasonable adaptivity, it fails to adjust its step-size to track the changes in the environment. We visualize this behaviour in the right panels of Figure \ref{fig:simulated_res}, which show the average step-sizes, $\bar{\gamma}_t = \sum_{i=1}^k p_t^i \gamma_i$ produced by AgACI and DtACI. We see that AgACI correctly gives a small step-size in the first phase when the distribution shifts are small. However, once when we enter the second stage and the distribution shifts jump in magnitude, AgACI is slow to react and its step-size lags behind that of DtACI. This behaviour is amplified in the third stage during which AgACI never decreases its step-size to match the smaller distribution shifts and thus suffers large coverage errors throughout.  
\end{itemize}
Overall, we find that while AgACI and MVP perform slightly better than DtACI in the stationary setting, this comes at the cost of a greater loss of adaptivity in the non-stationary settings.

\subsection{Real data examples}\label{sec:real_data}

\subsubsection{Online prediction in the stock market}\label{sec:stocks}

For our first real data example, we return to a stock market prediction task that was originally used to evaluate ACI (\cite{Gibbs2021}). In this problem, the goal is to use the previously observed values of a stock price, $\{P_s\}_{s=0,\dots,t-1}$, to predict its volatilty at the next step, defined as
\[
V_t :=  \left(\frac{P_t - P_{t-1}}{P_{t-1}}\right)^2.
\]
To predict the volatility, we model the stock returns $R_t := \frac{P_t - P_{t-1}}{P_{t-1}}$ as coming from a GARCH(1,1) design. This is a classical financial model for market dynamics in which it is assumed that $R_t = \sigma_t\epsilon_t$ with $\epsilon_t \sim \mathcal{N}(0,1)$ and 
\[
\sigma^2_t = \omega + \tau V_{t-1} + \lambda \sigma_{t-1}^2,
\]
for some unknown parameters $\omega$, $\tau$, $\lambda$. At each time step, $t$, we use the most recent 1250 days of returns $\{R_s\}_{t-1250 \leq s<t}$ to produce estimates $\hat{\omega}_t $, $\hat{\tau}_t$, $\hat{\lambda}_t$, $\{\hat{\sigma}^s_t\}_{s<t}$, and a one-step ahead prediction $(\hat{\sigma}^t_t)^2 = \hat{\omega}_t + \hat{\tau}_t V_{t-1} + \hat{\lambda}_t (\hat{\sigma}^{t-1}_{t})^2$. We then construct prediction sets using the equation
\[
\hat{C}_t(\alpha_t) := \left\{v : S_t(v) \leq \text{Quantile}\left(1-\alpha_t, \frac{1}{1250} \sum_{s=t-1250}^{t-1} \delta_{S_s(V_s)} \right) \right\},
\]
where $S_t(v)$ is taken to be either the normalized conformity score $S_t(v) = |v - (\hat{\sigma}^t_t)^2|/(\hat{\sigma}^t_t)^2$ or the unnormalized score $S_t(v) := |v -(\hat{\sigma}^t_t)^2|$. 

\begin{figure}[h]
\begin{centering}
\includegraphics[scale=0.2]{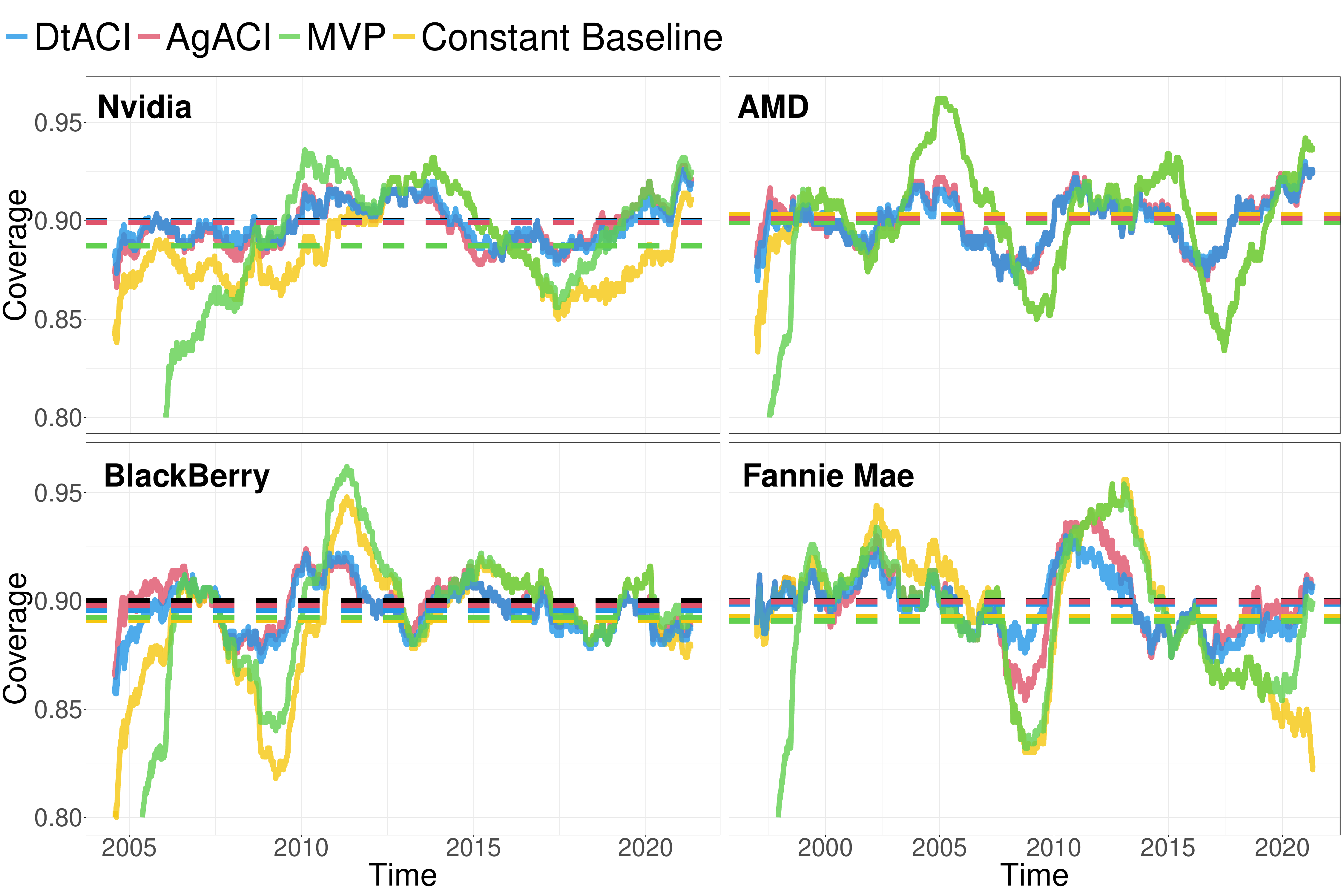}
\caption{Estimation of stock market volatility using conformity score $S_t(v) = |v - (\hat{\sigma}^t_t)^2|/(\hat{\sigma}^t_t)^2$. Solid lines show the local coverage level for DtACI (blue), AgACI (red), MVP (green), and a naive baseline that holds $\alpha_t = \alpha$ fixed (yellow). Dashed lines indicate the global coverage frequency over all time steps for the same methods. Finally, the black dashed lines indicates the target level of $1-\alpha = 0.9$. Note that in some of the panels (e.g.~the top-right AMD panel) the yellow and green lines exactly overlap for many time steps leaving only the green line clearly visible. Some overlap is also observed in the red and blue lines.}
\label{fig:stock_normalized}
\end{centering}
\end{figure} 

In \cite{Gibbs2021} it was found that the normalized and unnormalized conformity scores lead to distribution shifts of drastically different sizes. To gain some intuition as to why this is the case, observe that if the GARCH(1,1) model is true, and moreover $(\hat{\sigma}^t_t)^2 = \sigma^2_t$ is an exactly accurate prediction, then the normalized score is distributed as $S_t(V_t) \sim \chi^2_1$, while the unnormalized score follows $S_t(V_t) \sim \sigma_t^2\chi^2_1$. Thus, in this setting, $\alpha^*_t$ will be invariant across time for the normalized score and highly variable for the unnormalized score. Consistent with this intuition, \cite{Gibbs2021} found that the distribution shift is much larger for the unnormalized score than the normalized score and, thus, in order to obtain good local coverage, ACI requires different (and \textit{a priori} unknown) step sizes for the two scores. In contrast, as we will show shortly, DtACI obtains good performance in both conditions without any prior knowledge of the distribution shift.

Similar to the previous section, we measure the performance of the prediction sets by their local coverage. Since the true time instantaneous coverage is no longer an observable quantity, we instead compute empirical local average coverage rates over a moving 500-day window, i.e. we compute the moving average
\begin{equation}\label{eq:local_cov}
\text{LocalCov}_t :=1- \frac{1}{500} \sum_{t - 250 + 1}^{t+250} \text{err}_t.
\end{equation}

\begin{figure}[ht]
\begin{centering}
\includegraphics[scale=0.2]{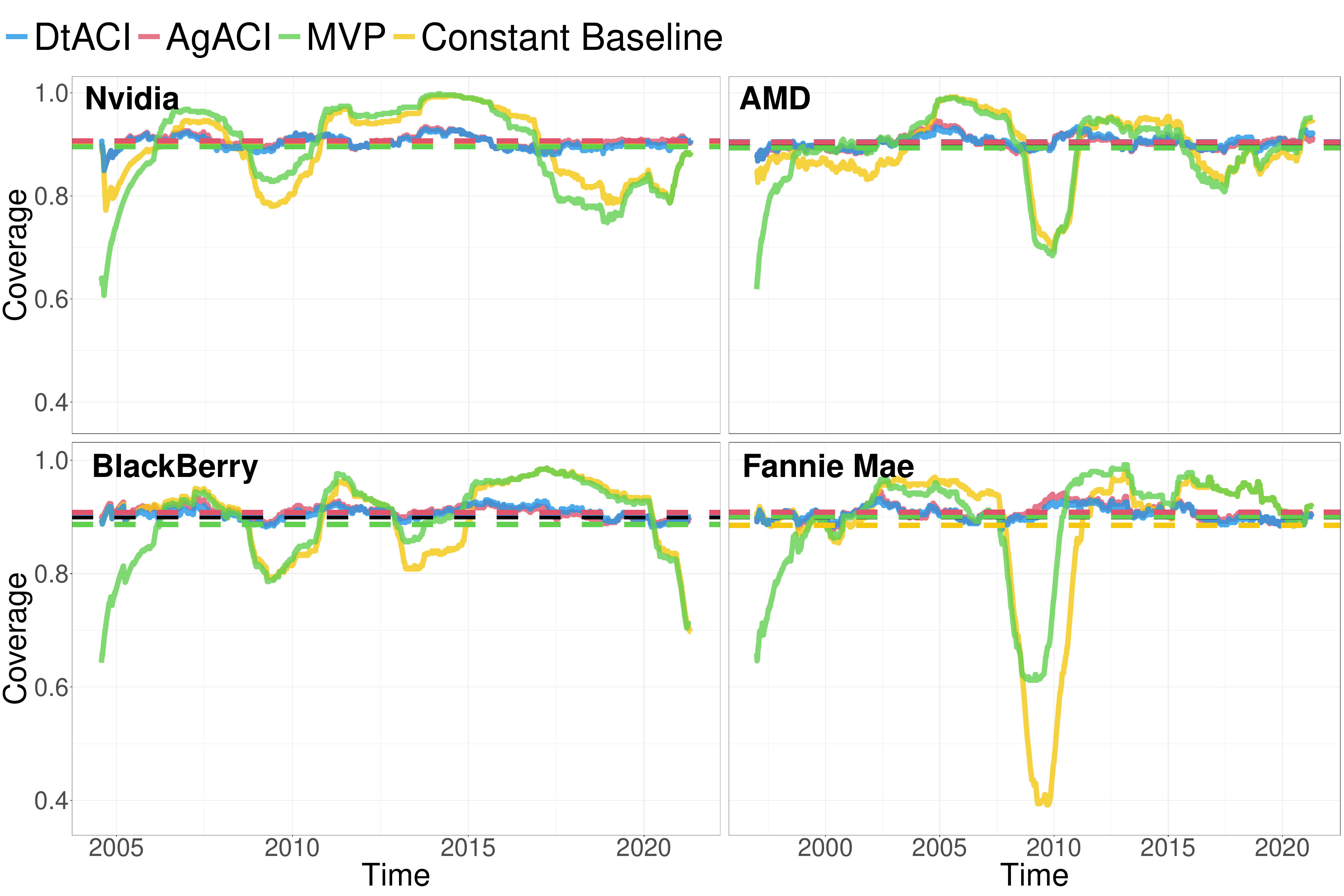}
\caption{Estimation of stock market volatility using conformity score $S_t(v) = |v - (\hat{\sigma}^t_t)^2|$. Solid lines show the local coverage level for DtACI (blue), AgACI (red), MVP (green), and a naive baseline that holds $\alpha_t = \alpha$ fixed (yellow). Dashed lines indicate the global coverage frequency over all time steps for the same methods. Finally, the black dashed lines indicates the target level of $1-\alpha = 0.9$.}
\label{fig:stock_unnormalized}
\end{centering}
\end{figure} 

Figures \ref{fig:stock_normalized} and \ref{fig:stock_unnormalized} show the local average coverage rates obtained by DtACI, AgACI, and MVP, using the normalized and unnormalized conformity scores on four different stocks. In addition, we have also plotted the coverage values for a naive baseline method that holds $\alpha_t = \alpha$ fixed. At a high level, this baseline essentially measures the size of the underlying shifts in the environment. Finally, as an additional reference, Figure \ref{fig:stock_prices} in the Appendix shows the prices of the stocks over the same period. 

All four stocks demonstrate obvious price swings leading to clear distribution shifts in the data. Similar to our simulated examples, we find that DtACI is able adapt to both the presence and size of these shifts and obtain a local coverage rate near the target level of $1-\alpha = 0.9$ over all time steps and conditions. Overall, the size of the distribution shifts in these environments appears relatively constant, and thus AgACI produces nearly identical results to DtACI. Perhaps the only exception to this is the Fannie Mae data in Figure \ref{fig:stock_normalized} for which there is a small time window near the 2009 financial crisis where AgACI is slow to react to the sharp rise in volatility. Finally, we find that MVP shows minimal adaptivity to the underlying shifts throughout and often performs nearly identically to the fixed baseline method.

While DtACI seems to perform reasonably well across all settings, one may still wonder if additional improvements can be made. Namely, is it possible for a sensible method to produce local coverage errors that are tighter to the $1-\alpha$ line? To answer this question, we compare the coverage properties of DtACI against the ideal situation in which the coverage errors are an  i.i.d.~Bernoulli($\alpha$) sequence. Namely, Figure \ref{fig:bern_comp} shows Q-Q plots comparing the empirical distributions of $|\text{LocalCov}_t - (1-\alpha)|$ realized by DtACI against the distribution of the same quantity for an i.i.d. Bernoulli($\alpha$) sequence. We find that the two distributions nearly exactly align across all four stocks. Thus, in this experiment the local coverage properties of DtACI are in some sense unimprovable. As a final remark, note that while Figure \ref{fig:bern_comp} only shows results for the normalized conformity scores, we also obtained nearly identical results for the unnormalized scores (see Figure \ref{fig:bern_comp_bad} in the Appendix).

\begin{figure}[h]
\begin{centering}
\includegraphics[scale=0.2]{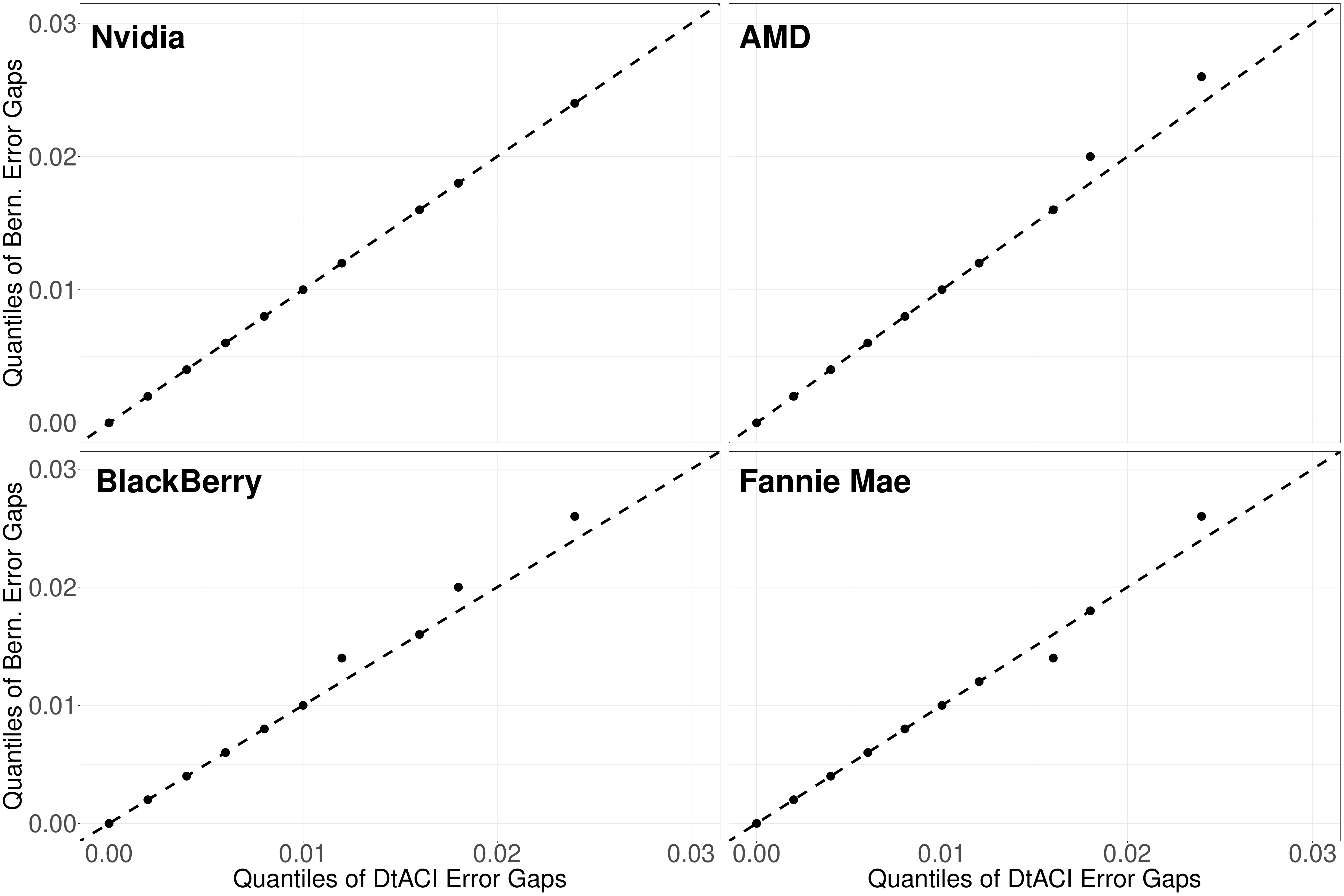}
\caption{Q-Q plots comparing the distribution of local coverage gaps obtained by an i.i.d. Bernoulli($\alpha$) sequence against those realized by DtACI. The dashed line indicates the ideal situation of exact equality.}
\label{fig:bern_comp}
\end{centering}
\end{figure}

\subsubsection{Evaluating the reactivity of DtACI}\label{sec:cond_on_alphat_cov}

In \cite{Bastani2022}, adaptive conformal inference was criticized for failing to provide a coverage guarantee conditional on the value of $\alpha_t$. Indeed, one may be concerned that since ACI acts reactively by widening/shrinking its prediction sets in response to past mistakes, good local coverage is obtained not due to successfully learning $\alpha^*_t$, but rather as a result of the simple tendency of the algorithm to correct its prior under/over-coverage. Our simulations in Section \ref{sec:simulated_data} already partially show that DtACI can successfully learn $\alpha^*_t$. Here, we provide further evidence demonstrating that DtACI is not simply acting reactively on real data.

To do this, we divide the interval $[0,1]$ into $m$ evenly sized sub-intervals $B_1,\dots,B_m$ and evaluate the conditional coverage levels 
\begin{equation}\label{eq:cond_coverage}
\text{CondCoverage}_i := \frac{1}{ |\{t : \bar{\alpha}_t \in B_i\}|  } \sum_{t : \bar{\alpha}_t \in B_i} \text{err}_t.
\end{equation}
We apply this to the volatility dataset outlined in the previous section using the normalized conformity score and display our results in Figure \ref{fig:cond_alphat_cov}. As a visual aid, this figure contains error bars indicating the 0.05 and 0.95 quantiles of $\text{CondCoverage}_i $ across block bootstrap re-samples of the data (see Algorithm \ref{alg:block_boot} in the Appendix for details). These error bars would be expected to give a valid confidence interval for the conditional coverage if, for instance, $\{(X_t,Y_t)\}$ was a stationary time-series. However, since there is distribution shift in these examples, the error bars are \textit{not} accompanied by any coverage guarantee. Thus, we present them simply as a visual aid to help the reader judge the distance between $\text{CondCoverage}_i$ and $1-\alpha$ relative to the sample size $|\{t : \bar{\alpha}_t \in B_i\}|$. 

\begin{figure}[ht]
\begin{centering}
\includegraphics[scale=0.18]{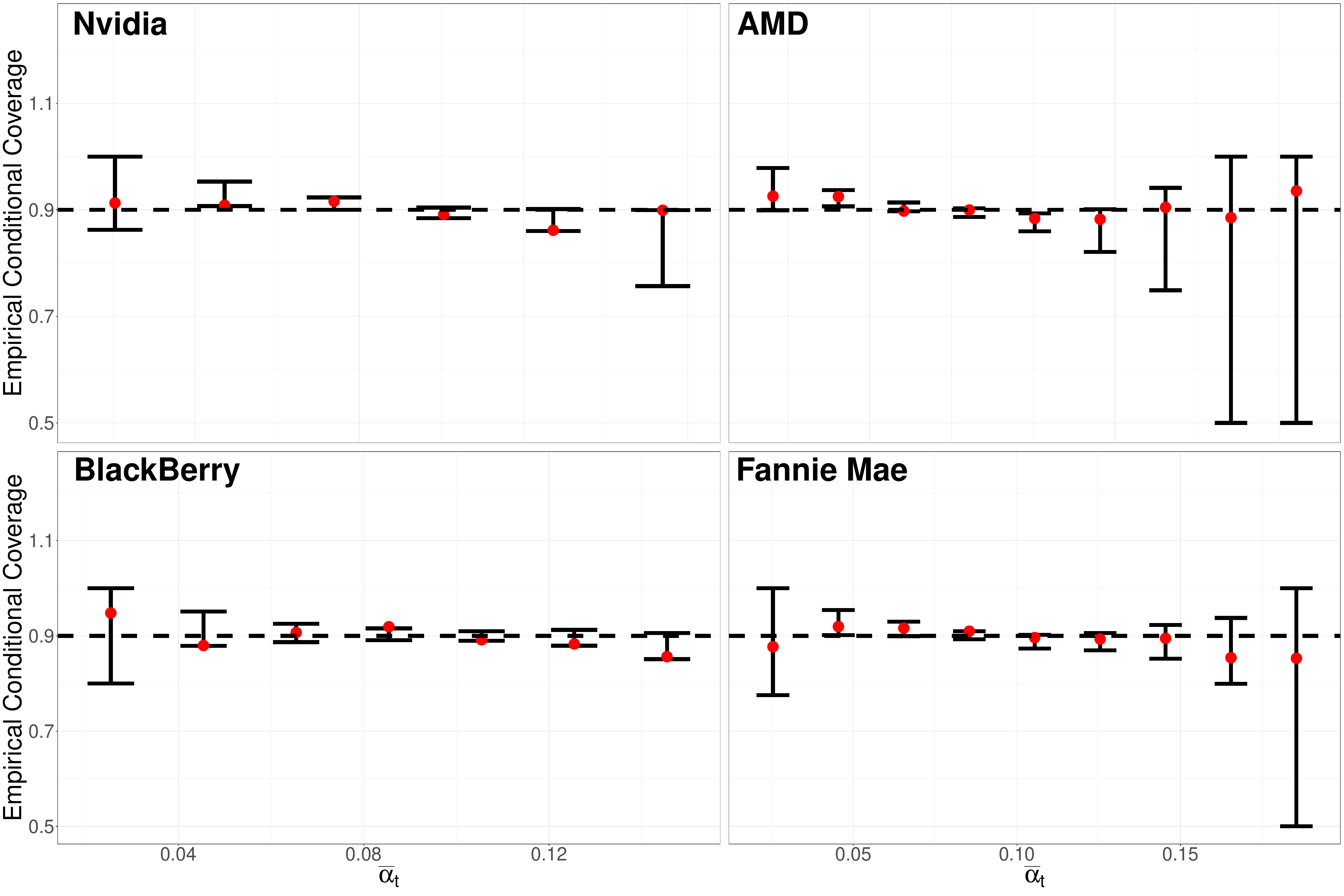}
\caption{Empirical conditional coverage of $\bar{\alpha}_t$ for the estimation of stock market volatility with conformity score $S_t(v) = |v - (\hat{\sigma}^t_t)^2|/(\hat{\sigma}^t_t)^2$. Red points show the empirical conditional coverage given $\bar{\alpha}_t \in B_i$ with error bars indicating the corresponding 0.05 and 0.95 quantiles across 100 block bootstrap resamples of the data $\{(X_i,Y_i)\}$ with block-size 100. For visual clarity all error bars are truncated to the range $[0.5,1]$. Black dashed lines shows the target level of $1-\alpha = 0.9$. }
\label{fig:cond_alphat_cov}
\end{centering}
\end{figure} 

Overall, we find that almost all of the error bars cover the target level of $1-\alpha = 0.9$ and for a large majority of the bins, $\text{CondCoverage}_i$ is nearly exactly equal to 0.9. If DtACI was simply acting reactively to its past mistakes we would expect to observe over-coverage at small values of $\alpha_t$ and under-coverage at large values of $\alpha_t$. Thus, Figure \ref{fig:cond_alphat_cov} provides strong evidence that DtACI is not enacting any such pathological behaviour. Similar results were also obtained using the unnormalized conformity scores (see Figure \ref{fig:cond_alphat_cov_unnormalized} in the Appendix). 

\subsection{Predicting Covid-19 case counts}\label{sec:covid}

Our final example considers the problem of predicting future COVID-19 case counts. We base our methods on the work of \citet{Tibs2020} and work with a simple model for generating one-week ahead forecasts of the seven day moving average of the number of confirmed cases of COVID-19 in each county in the United States. In this model, future forecasts are generated based off of the historical prevalence of COVID-19 across the US and Facebook survey data that provides us with a moving seven day average of the proportion of people who report knowing someone in their local community with COVID-19.  All data is obtained from a public repository made available by the DELPHI group at Carnegie Mellon (\cite{Reinhart2021}). 

Let $\{CO_{t,i}\}_{t,i}$ and $\{F_{t,i}\}_{t,i}$ denote the time series of COVID-19 case counts and Facebook survey responses, respectively, where $t$ indexes time and $i$ indexes one of the 3243 counties in the US.  At each time step $t$ we predict $\{CO_{t+7,i}\}_i$ by using least-squares regression to fit the model
\[
CO_{s,i} \sim \beta^t_0 + \sum_{j=1}^3 \lambda_j^t CO_{s-7j,i} + \sum_{j=1}^3 \kappa_j^t F_{s-7j,i},\ s = t-14\dots,t,\ i =1 ,\dots, 3243,
\]
and setting
\[
\widehat{CO}_{t+7,i} = \hat{\beta}^t_0 + \sum_{j=1}^3 \hat{\lambda}_j^t CO_{t-7j,i} + \sum_{j=1}^3 \hat{\kappa}_j^t F_{t-7j,i}, \ i =1 ,\dots, 3243.
\]
Because Facebook survey data is not available for all counties at all time steps, we restrict our analysis to those counties with no missing values in the above expressions. 

\begin{figure}[ht]
\begin{centering}
\includegraphics[scale=0.2]{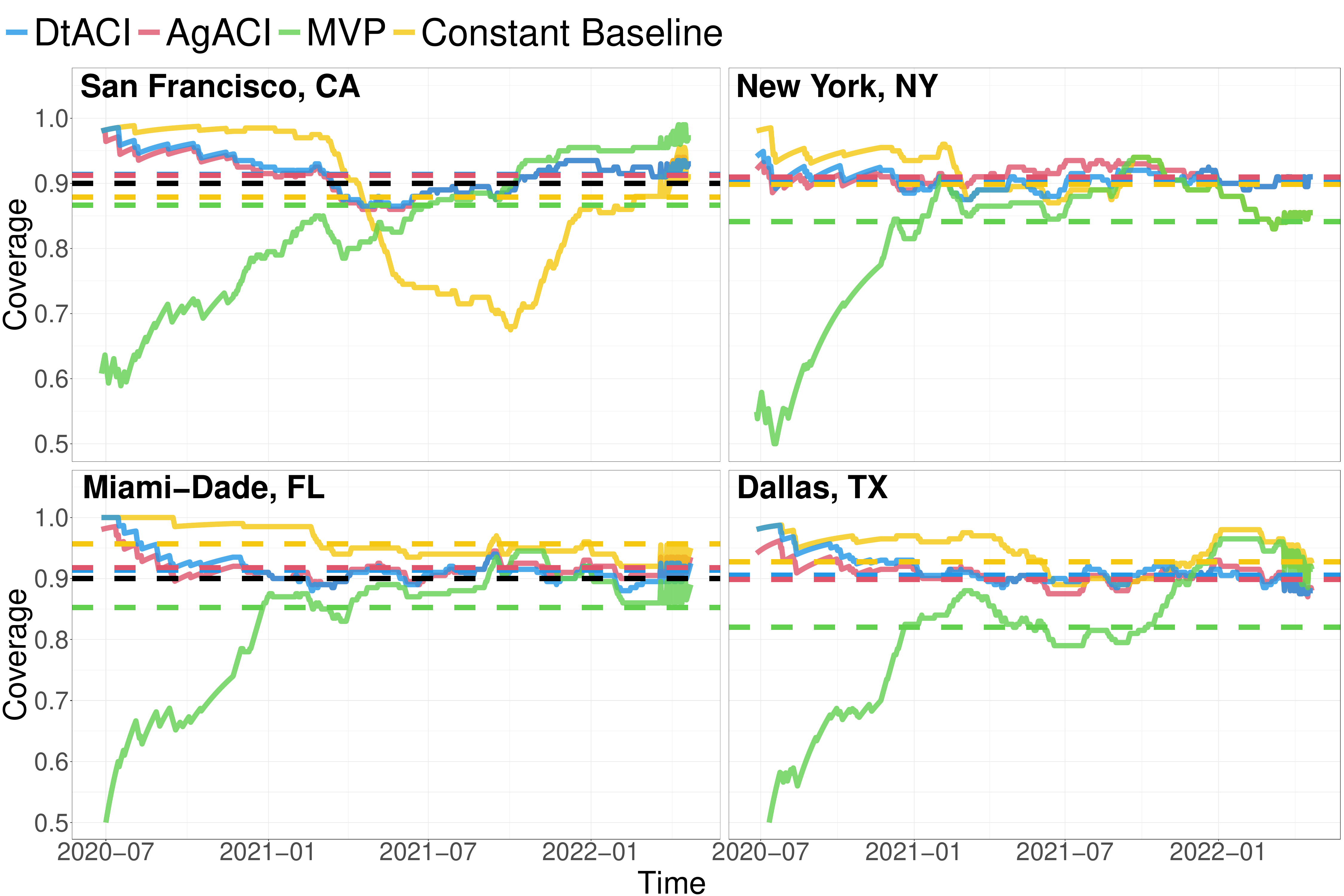}
\caption{Coverage results for the prediction of county-level COVID-19 case counts. Solid lines show the local coverage level for DtACI (blue), AgACI (red), MVP (green), and a naive baseline that holds $\alpha_t = \alpha$ fixed (yellow). Dashed lines indicate the global coverage frequency over all time steps for the same methods. The black dashed lines indicates the target level of $1-\alpha = 0.9$.}
\label{fig:covid_cov}
\end{centering}
\end{figure} 

To compute prediction sets for county $i$, we define the conformity scores $S_{t,i} := |\widehat{CO}_{t,i} - CO_{t,i}|/|CO_{t-7,i} - CO_{t,i}|$ and counts $n_t := |\{i : \text{County $i$ has available data at time } t-1\}|$, and set
\[
\hat{C}_{t,i}(\alpha_{t,i}) := \left\{ c : \frac{|\widehat{CO}_{t,i} - c|}{|CO_{t-7,i} - c|} \leq \text{Quantile}\left( 1-\alpha_{t,i}, \frac{1}{n_t} \sum_i \delta_{S_{t-1,i}}  \right) \right\}.
\]
Since different counties will undergo different dynamics at different stages of the pandemic, we run DtACI, AgACI, and MVP separately for each county to obtain a set of county-specific trajectories, $\{\alpha_{t,i}\}_{t,i}$ for $\alpha_t$.

Figure \ref{fig:covid_cov} shows the empirical local coverage rates over the nearest 200 time steps (i.e~equation (\ref{eq:local_cov}), with 250 replaced by 100) for four US counties. These four counties were chosen because they are large urban centres for which data was available over the entire time window we considered.  All four of these counties have undergone multiple waves of COVID-19, each of which caused a large swing in the observed case count (see Figure \ref{fig:covid_case_counts}) and thus induced a clear distribution shift into the data. Much like the previous example, we find that DtACI and AgACI successfully correct for these shifts, while MVP provides inconsistent coverage across the four examples. This contrasts sharply with the baseline method that holds $\alpha_t = \alpha$ fixed, which, depending on the example, undergoes large swings (e.g. top-left panel) or displays a systematic bias (e.g. bottom-left panel) away from the target coverage frequency.

\section{Acknowledgments}

E.C. was supported by the Office of Naval Research grant N00014-20-1-2157, the National Science Foundation grant DMS-2032014, the Simons Foundation under award 814641, and the ARO grant 2003514594. I.G. was supported under the Office of Naval Research grant N00014-20-1-2157 and the Simons Foundation under award 814641 as well as by the Overdeck Fellowship Fund. The authors thank Elad Hazan for useful pointers to the online learning literature and John Cherian for helpful comments on an earlier version of this work.

\newpage

\bibliography{FullAdaptiveConformalBib}

\newpage

\appendix

\section{Appendix}

\subsection{Detailed description of multivalid conformal prediction}\label{sec:MVP}

The generic version of multivalid conformal prediction (MVP) proposed by \citet{Bastani2022} is designed to give simultaneous coverage over a collection of subsets of the covariate space. This is accomplished by constructing prediction sets of the form $\{y : S_t(X_t,y) \leq q\}$, where $q$ is chosen from a set of candidate values based off of their performance on historical data. To make this method comparable to DtACI and AgACI we implement a modified version that does not consider any subsets of the covariate space and treats $1-\beta_t$ as the conformity score. More specifically, our implementation outputs prediction sets of the form $\{y : 1-\beta_t(y) \leq q\}$, where $q$ is chosen using the MVP algorithm and 
\[
\beta_t(y) := \max\{0 \leq \beta \leq 1 : S_t^y \leq \text{Quantile}(1-\beta, \mathcal{D}^y_t)\}.  
\] 
With this construction the value of $1-q$ output by MVP is exactly anologous to the values $\alpha_t$ output by DtACI and AgACI.

The full details of our implementation are given in Algorithm \ref{alg:MVP} below. Following the original implementation of MVP in \citet{Bastani2022} we take our hyperparameters to be $m = 40$, $\eta = \sqrt{\frac{\log(m)}{4.2 m}}$, and $r = 800000$. Finally, in what follows $f(\cdot)$ refers to the function $f(n) := \sqrt{(n+1)\log(n+2)^{2}}$
 
\begin{algorithm}
 \KwData{Observed values $\{\beta_t\}_{1 \leq t \leq T}$, target coverage level $\alpha$, number of candidate thresholds $m$, hyperparameters $\eta$, $r$.}
 \For{$t=1,2,\dots,T$}{
    \For{$i=0,1,\dots,m-1$}{
 	$n^i_t \leftarrow \sum_{s<t} \bone\{i/m \leq q_s < (i+1)/m \text{ or } i=m-1, q_s=1 \}$;\\
    $V_t^i \leftarrow \sum_{s<t} \bone\{i/m \leq q_s < (i+1)/m \text{ or } i=m-1, q_s=1\}(\alpha - \bone\{\beta_s < 1-q_s\} )$;\\
    $C^i_t = (\exp(\eta V_t^i/f(n_t^i))-\exp(-\eta V_t^i/f(n_t^i)))/f(n_t^i)$;
    }
    \If{$C^i_t > 0$ for all $i \in [m]$}{
    Output $q_t = 0$;
    }\ElseIf{$C^i_t < 0$ for all $i \in [m]$}{
    Output $q_t = 1$;
    }\Else{
    Set $i^* \in [m-1]$ to be the minimum index such that $C^{i^*}_{t}C^{i^*+1}_{t} \leq 0$;\\
    $p_t \leftarrow |C^{i^*+1}_{t}|/(|C^{i^*+1}_{t}| + |C^{i^*}_{t}|)$, with the convention $0/0=1$;\\
    Output $q_t = i^*/m - 1/(rm)$ with probability $p_t$ and $q_t = i^*/m$ with probability $1-p_t$;
    }
    Output prediction set $\{y : 1-\beta_t(y) \leq q_t\}$;
 }
 
 \caption{Modified version of the MVP algorithm of \cite{Bastani2022}.}
 \label{alg:MVP}
\end{algorithm}

\subsection{Details of the block bootstrap for Section \ref{sec:cond_on_alphat_cov}}

Algorithm \ref{alg:block_boot} below outlines the block bootstrap procedure used to generate the error bars in Figures \ref{fig:cond_alphat_cov} and \ref{fig:cond_alphat_cov_unnormalized}.  Both figures were generated using the choices $M=100$ and $b=100$.

\begin{algorithm}
 \KwData{Sequence of stock returns $\{R_t\}_{1 \leq t \leq T}$, number of bootstrap samples $M$, block size $b$, partition $B_1,\dots,B_k$ of $[0,1]$.}
	Define the blocks of data $D_i := \{R_{ib+1},\dots,R_{(i+1)b\}}\}$, $0 \leq i \leq  T/b - 1 $\;
 \For{$j=1,2,\dots,M$}{
 	Sample $i^j_1,\dots,i^j_{T/b} \overset{\text{i.i.d.}}{\sim} \text{Unif}(\{0,\dots,T/b-1\})$\;
 	Run the procedure outlined in Section \ref{sec:stocks} on dataset $\{D_{i^j_1},\dots,D_{i^j_{T/b}}\}$ to obtain sequences $\{\bar{\alpha}_t^j\}_{1 \leq t \leq T}$ and  $\{\text{err}_t^j\}_{1\leq t \leq T}$\;
 	For $1 \leq \ell \leq k$ compute
 	\[
 	\text{CondCoverage}^j_{\ell} := \frac{1}{ |\{t : \bar{\alpha}^j_t \in B_{\ell}\}|  } \sum_{t : \bar{\alpha}^j_t \in B_{\ell}} \text{err}^j_t.
 	\]
 }
 For all $1 \leq \ell \leq k$ output the 0.05 and 0.95 empirical quantiles of $\{\text{CondCoverage}^j_{\ell}\}_{1 \leq j \leq M}$.
  \caption{Block bootstrap}
 \label{alg:block_boot}
\end{algorithm}

\subsection{Proofs for Section \ref{sec:reg_bounds}}

This section contains the proofs of Lemmas \ref{lem:adapt_reg_bound}  and \ref{lem:reg_ogd} as well as of Theorem \ref{thm:dynamic_reg}. In addition we prove a modified version of Theorem \ref{thm:dynamic_reg} that allows $\eta$ to vary over time. We begin by proving the results stated in Section \ref{sec:reg_bounds}.

\subsubsection{Proof of Lemma \ref{lem:adapt_reg_bound}} 
\begin{proof}
We follow the calculations of \cite{Gradu2021}. Let $\ell(\beta_t) := (\ell(\beta_t,\alpha_t^1),\dots,\ell(\beta_t,\alpha_t^k))$, $\ell(\beta_t)^2 := (\ell(\beta_t,\alpha_t^1)^2,\dots,\ell(\beta_t,\alpha_t^k)^2)$ and $p_t := (p_t^1,\dots,p_t^k)$. By construction notice that $W_{t+1} := \sum_{i=1}^k w_{t+1}^i = \bar{W}_t$. Thus, using the inequalities $\exp(-x) \leq 1-x+x^2$ and $1-y \leq \exp(-y)$ we find that 
\[
\frac{W_{t+1}}{W_t} = \sum_{i=1}^k p_t^i \exp(-\eta \ell(\beta_t,\alpha_t^i)) \leq \exp(-\eta p_t^T  \ell(\beta_t) + \eta^2 p_t^T  \ell(\beta_t)^2),
\]
and thus inductively 
\[
W_{s+1}/W_r \leq \exp\left(-\sum_{t=r}^s \eta p_t^T  \ell(\beta_t) + \eta^2 p_t^T  \ell(\beta_t)^2\right).
\]
On the other hand, for any $i$, $w^i_{t+1} \geq w^i_t(1-\sigma)\exp(-\eta \ell(\beta_t,\alpha_t^i))$ which gives
\[
\frac{W_{s+1}}{W_r} \geq \frac{w^i_{s+1} }{W_r} \geq (1-\sigma)^{|I|}\exp\left(-\sum_{t=r}^s \eta \ell(\beta_t,\alpha_t^i)\right)p_r^i \geq (1-\sigma)^{|I|}\exp\left(-\sum_{t=r}^s \eta \ell(\beta_t,\alpha_t^i)\right) \frac{\sigma}{k}.
\]
Combining these two inequalities and taking a logarithm yields 
\[
\log(\sigma/ k) + |I|\log(1-\sigma) - \sum_{t=r}^s \eta \ell(\beta_t,\alpha_t^i) \leq -\sum_{t=r}^s \eta p_t^T  \ell(\beta_t) + \eta^2 p_t^T  \ell(\beta_t)^2.
\]
Finally, since $\sigma \leq 1/2$ we may use the inequality $\log(1-\sigma) \geq -2\sigma$ to get the final result
\[
 \sum_{t=r}^s \mme[\ell(\beta_t,\alpha_t)] \leq   \sum_{t=r}^s \ell(\beta_t,\alpha_t^i) + \eta\sum_{t=r}^s \mme[\ell(\beta_t,\alpha_t)^2]  +  \frac{1}{\eta}\left(\log(k/\sigma\right) + |I|2\sigma ).
\]
\end{proof}

\subsubsection{Proof of Lemma \ref{lem:reg_ogd}}
\begin{proof}
Lemma 4.1 in \cite{Gibbs2021} shows that for all values $t$, $\alpha_t^i \in [-\gamma_i,1+\gamma_i]$. Since $\beta_t \in [0,1]$ we then also have that $\ell_t(\beta_t,\alpha_t^i) \leq \max\{\alpha,1-\alpha\} | \beta_t - \alpha^i_t| \leq 1+\gamma_i$. Plugging this fact into Theorem 10.1 of \cite{Hazan2019} gives the result.
\end{proof}

\subsubsection{Proof of Theorem \ref{thm:dynamic_reg}}
\begin{proof}
Fix any $i \in [k]$ and write 
\begin{align*}
 & \sum_{t=r}^s \mme[\ell(\beta_t,\alpha_t)]  -  \sum_{t=r}^s \ell(\beta_t,\alpha^*_t)\\
 & =  \bigg(\sum_{t=r}^s \mme[\ell(\beta_t,\alpha_t)] -  \sum_{t=r}^s \ell(\beta_t,\alpha^i_t) \bigg)  +  \bigg(\sum_{t=r}^s \ell(\beta_t,\alpha^i_t)  -  \sum_{t=r}^s \ell(\beta_t,\alpha^*_t) \bigg).
\end{align*}
Applying Lemma \ref{lem:adapt_reg_bound} to the first term and Lemma \ref{lem:reg_ogd} to the second term gives
\begin{align*}
 \sum_{t=r}^s \mme[\ell(\beta_t,\alpha_t)]  -  \sum_{t=r}^s \ell(\beta_t,\alpha^*_t) \leq  & \eta\sum_{t=r}^s \mme[\ell(\beta_t,\alpha_t)^2]  +  \frac{1}{\eta}\left(\log(|E|/\sigma\right) + |I|2\sigma )\\
  & + \frac{3}{2\gamma_i} (1+\gamma_i)^2 \left(\sum_{t=r+1}^s |\alpha^*_t - \alpha^*_{t-1}| + 1 \right) + \frac{1}{2}\gamma_i |I|.
\end{align*}
Now, there are two cases. If 
\begin{equation}\label{eq:path_versus_gamma}
\sqrt{ \frac{\sum_{t=r+1}^s |\alpha^*_t - \alpha^*_{t-1}| + 1}{|I|}} \geq \gamma_1
\end{equation}
then since $\sqrt{\frac{\sum_{t=r+1}^s |\alpha^*_t - \alpha^*_{t-1}| + 1 }{|I|}} \leq \sqrt{1+1/|I|} \leq \gamma_{k}$ we may find a value $\gamma_i$ such that 
\[
\sqrt{ \frac{\sum_{t=r+1}^s |\alpha^*_t - \alpha^*_{t-1}| + 1}{|I|}}  \leq \gamma_i \leq 2\sqrt{ \frac{\sum_{t=r+1}^s |\alpha^*_t - \alpha^*_{t-1}| + 1}{|I|}} .
\]
Plugging this value into the previous expression gives the desired result. Otherwise if (\ref{eq:path_versus_gamma}) does not hold, then we may simply plug in $\gamma_1$ for $\gamma_i$.
\end{proof}

\subsubsection{Results for Variable $\eta$}

We conclude this section by stating and proving a modified version of Theorem \ref{thm:dynamic_reg} that allows $\eta$ to vary over time. In particular, we consider a modified version of DtACI in which the update for $\bar{w}^i_t$ is replaced by $\bar{w}^i_t \leftarrow w^i_t \exp(-\eta_t\ell_t(\beta_t,\alpha^i_t))$. Now, recall that our regret bounds consider a target interval length $|I| = L$. Then, Theorem \ref{thm:dymanic_eta_regret} below shows that if the variation in $\eta_t$ is of order $1/\sqrt{L}$, DtACI will have dynamic regret of size $O(\sqrt{\log(L)/{L}} +  \max\{\gamma_1,\sqrt{\frac{1}{L}\sum_{t=r+1}^s |\alpha^*_t - \alpha^*_{t-1}|}\})$. 

The primary case of interest is that in which $\eta_t = \sqrt{\frac{\log(L\cdot k) + 2}{\sum_{s=t-L+1}^{t} \mme[\ell(\beta_t,\alpha_t)^2]}}$ and $\sigma_t = \frac{1}{2L}$. Here the assumption that $\eta_t$ has small variability is motivated by a time-series model in which $(\beta_t,(\alpha^t_i,w_t^i)_{i \in [k]}))$ has a stationary distribution. For example, the original work of \cite{Gibbs2021} considers a setting in which $\{(X_t,Y_t)\}$ follows a hidden Markov model and $S_t^y = S(X_t,y)$ and $\mathcal{D}_t^y = \mathcal{D}$ are fixed quantities that do not depend on $t$. In this set-up, it follows that $(\beta_t,(\alpha^t_i,w_t^i)_{i \in [k]})$ forms a Markov chain and thus under reasonable mixing assumptions on $(X_t,Y_t)$ one can expect $\frac{1}{L}\sum_{s=t-L+1}^{t} \mme[\ell(\beta_t,\alpha_t)^2]$ to have variations of order $1/\sqrt{L}$.

\begin{theorem}\label{thm:dymanic_eta_regret}
Let $L \in \mmn$ denote a fixed target interval length and $I=[r,s]$ be any interval of length $L$ with $r>L$. For all $t>L$ set $\eta_t := \sqrt{\frac{\log(L\cdot k) + 2}{\sum_{s=t-L+1}^{t} \mme[\ell(\beta_t,\alpha_t)^2]}}$ and $\sigma = 1/(2L)$. Let $\{\alpha_t^i, 
p_t^i\}_{t \in [T], i \in [k]}$ be generated by a modified version of Algorithm \ref{alg:faci} in which the update for $\bar{w}^i_t$ is replaced by $\bar{w}^i_t \leftarrow w^i_t \exp(-\eta_t\ell_t(\beta_t,\alpha^i_t))$. Assume that 
\[
\frac{1}{L\eta_s}\sum_{t=r}^s |\eta_t - \eta_s| , \frac{1}{L\eta_s}\sum_{t=r}^s |\eta_t^2 - \eta^2_s| \leq O\left(\frac{1}{\sqrt{L}} \right).
\]
Then, under the conditions of Theorem \ref{thm:dynamic_reg}
\begin{align*}
 \frac{1}{L} \sum_{t=r}^s\mme[ \ell(\beta_t,\alpha_t)] - & \frac{1}{L} \sum_{r=1}^s \ell(\beta_t,\alpha_t^*) \leq  2\sqrt{\frac{ \log(L\cdot k) + 2 }{L}} \sqrt{\frac{1}{L}\sum_{t=r}^s \mme[\ell(\beta_t,\alpha_t)^2]}\\
 & \ \ \ \ \ \ \ \ \ \ \ \ \ \ \ \ \ \ \ \  + 4(1+\gamma_{\textup{max}})^2\max\left\{\sqrt{ \frac{ \sum_{t=r+1}^s |\alpha^*_t - \alpha^*_{t-1}|+1}{L}}, \gamma_1\right\}\\
 & \ \ \ \ \ \ \ \ \ \ \ \ \ \ \ \ \ \ \ \  + O\left(\frac{1}{\sqrt{L}} \right)   \\
 & \ \ \ \ \ \ \ \ \ \ \ \ \ \ \ \ \ \ = O\left( \sqrt{\frac{\log(L)}{L}}\right) + O\left( \max\{\gamma_1,\frac{1}{L}\sum_{t=r+1}^s |\alpha^*_t - \alpha^*_{t-1}|\} \right),
\end{align*}
where the expectation is over the randomness in DtACI and the data $\beta_1,\dots,\beta_T$ can be viewed as fixed.
\end{theorem}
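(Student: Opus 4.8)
The plan is to mirror the proof of Theorem \ref{thm:dynamic_reg}; the only new ingredient needed is a version of Lemma \ref{lem:adapt_reg_bound} that permits a time-varying learning rate. So the first step is to re-run the expert-aggregation argument from the proof of Lemma \ref{lem:adapt_reg_bound} with $\eta$ replaced by $\eta_t$ in the update $\bar{w}^i_t \leftarrow w^i_t \exp(-\eta_t \ell(\beta_t,\alpha_t^i))$. Because $\eta_t$ is a deterministic sequence (it is built from the deterministic quantities $\mme[\ell(\beta_u,\alpha_u)^2]$), the same chain $\exp(-x)\leq 1-x+x^2\leq \exp(-x+x^2)$ applied to $x = \eta_t\ell(\beta_t,\alpha_t^i)\geq 0$ yields, after telescoping $W_{s+1}/W_r$ from above and bounding it from below by a single expert $i$ and taking logarithms,
\[
\log(\sigma/k) + |I|\log(1-\sigma) - \sum_{t=r}^s \eta_t \ell(\beta_t,\alpha_t^i) \leq -\sum_{t=r}^s \eta_t\, p_t^\top \ell(\beta_t) + \sum_{t=r}^s \eta_t^2\, p_t^\top \ell(\beta_t)^2 .
\]
Taking expectations over the algorithm's randomness (so that $p_t^\top \ell(\beta_t)$ and $p_t^\top \ell(\beta_t)^2$ become $\mme[\ell(\beta_t,\alpha_t)]$ and $\mme[\ell(\beta_t,\alpha_t)^2]$) and using $\log(1-\sigma)\geq -2\sigma$ reproduces the earlier bound, except that $\eta_t$ now sits inside each sum and cannot be divided out.

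The crux is to trade the time-varying $\eta_t$ for the single value $\eta_s$. I would decompose each sum as $\eta_s\sum_{t=r}^s(\cdot) + \sum_{t=r}^s(\eta_t-\eta_s)(\cdot)$, and likewise $\eta_t^2 = \eta_s^2 + (\eta_t^2-\eta_s^2)$ for the quadratic term. The three remainders are controlled using the uniform bound $\ell(\beta_t,\alpha_t^i)\leq 1+\gamma_{\textnormal{max}}$ established in the proof of Lemma \ref{lem:reg_ogd}, so that together they are at most $(1+\gamma_{\textnormal{max}})^2\big(\sum_{t=r}^s|\eta_t-\eta_s| + \sum_{t=r}^s|\eta_t^2-\eta_s^2|\big)$. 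The hypothesis that $\frac{1}{L\eta_s}\sum_{t=r}^s|\eta_t-\eta_s|$ and $\frac{1}{L\eta_s}\sum_{t=r}^s|\eta_t^2-\eta_s^2|$ are $O(1/\sqrt{L})$ then bounds each remainder by $O(\sqrt{L}\,\eta_s)$. Dividing through by $\eta_s$, these collapse into a single additive $O(\sqrt{L})$ slack, and we recover exactly the statement of Lemma \ref{lem:adapt_reg_bound} with $\eta = \eta_s$ plus this error.

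From here the argument is identical to the proof of Theorem \ref{thm:dynamic_reg}: I would add the gradient-descent regret bound of Lemma \ref{lem:reg_ogd} to the comparator term and optimize over the grid $\{\gamma_i\}$, using $\gamma_{i+1}/\gamma_i\leq 2$ and $\gamma_k\geq\sqrt{1+1/L}$ to select a near-optimal $\gamma_i$ (or fall back to $\gamma_1$). The final bookkeeping task is to insert $\sigma = 1/(2L)$ and $\eta_s = \sqrt{(\log(Lk)+2)/\sum_{u=s-L}^{s-1}\mme[\ell(\beta_u,\alpha_u)^2]}$ and verify that the two leading terms $\frac{1}{L\eta_s}(\log(2kL)+1)$ and $\frac{\eta_s}{L}\sum_{t=r}^s\mme[\ell(\beta_t,\alpha_t)^2]$ balance to $2\sqrt{\frac{\log(2kL)+2}{L}}\sqrt{\frac{1}{L}\sum_{t=r}^s\mme[\ell(\beta_t,\alpha_t)^2]}$, using $\log(2kL)+1 \leq \log(Lk)+2$.

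The main obstacle is the $\eta_t\to\eta_s$ replacement, and the accounting must be handled carefully because $\eta_s$ is defined through the sum over $[s-L,s-1]$ whereas the regret is measured over $I=[r,s]=[s-L+1,s]$. I would observe that these two windows overlap in $L-1$ indices and differ only in the boundary terms $\mme[\ell(\beta_s,\alpha_s)^2]$ and $\mme[\ell(\beta_{s-L},\alpha_{s-L})^2]$, both $O(1)$; since the sums are $\Theta(L)$, replacing $\sum_{u=s-L}^{s-1}$ by $\sum_{t=r}^s$ perturbs each leading term by a factor $1+O(1/L)$, contributing only $O(1/\sqrt{L})$ after the final division by $L$. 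Combining this with the $O(\sqrt{L})/L = O(1/\sqrt{L})$ slack inherited from the modified lemma yields the stated $O(1/\sqrt{L})$ remainder and completes the proof.
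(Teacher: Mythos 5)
Your proposal follows essentially the same route as the paper's proof: re-run the exponential-weights telescoping of Lemma \ref{lem:adapt_reg_bound} with $\eta_t$ in place of $\eta$, trade each $\eta_t$ (and $\eta_t^2$) for $\eta_s$ using the uniform bound $\ell(\beta_t,\alpha_t^i)\leq 1+\gamma_{\textnormal{max}}$ so that the hypothesis on $\sum_t|\eta_t-\eta_s|$ and $\sum_t|\eta_t^2-\eta_s^2|$ absorbs the slack into the $O(1/\sqrt{L})$ term, and then conclude exactly as in Theorem \ref{thm:dynamic_reg}. Your additional remark about the one-index mismatch between the window $[s-L,s-1]$ defining $\eta_s$ and the regret interval $[r,s]$ is a detail the paper glosses over, and your treatment of it is sound.
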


\begin{proof}
Proceeding identically to the proof of Lemma \ref{lem:adapt_reg_bound} we have that for any $i \in [k]$
\[
 \sum_{t=r}^s \eta_t\mme[\ell(\beta_t,\alpha_t)] \leq   \sum_{t=r}^s \eta_t\ell(\beta_t,\alpha_t^i) + \sum_{t=r}^s \eta_t^2 \mme[\ell(\beta_t,\alpha_t)^2]  +  \log\left(k/\sigma\right) + |I|2\sigma 
\]
Now, by Lemma 4.1 of \cite{Gibbs2021} we know that $\alpha_t^i \in [-\gamma_i,1+\gamma_i]$ and thus that  $\ell(\beta_t,\alpha^i_t),\ell(\beta_t,\alpha_t) \leq 1+\gamma_{\text{max}}$. Hence, 
\begin{align*}
 \eta_s \sum_{t=r}^s\mme[\ell(\beta_t,\alpha_t)] &  \leq \eta_{s} \sum_{t=r}^s \ell(\beta_t,\alpha_t^i) + \eta_s^2 \sum_{t=r}^s \mme[\ell(\beta_t,\alpha_t)^2]  +  \log(k/\sigma) + |I|2\sigma\\
& \quad \quad + 2(1+\gamma_{\text{max}})\sum_{t=r}^s |\eta_t - \eta_s| + (1+\gamma_{\text{max}})^2\sum_{t=r}^s |\eta_t^2 - \eta^2_s|.
\end{align*}
The remainder of the proof is identical to that of Theorem \ref{thm:dynamic_reg}.

\end{proof}

\subsection{Proofs for Section \ref{sec:coverage_bounds}}

\subsubsection{Proof of Proposition \ref{prop:loss_connection}}

\begin{proof}
For simplicity we will only prove the case where $\tau > \alpha^*$. The case where $\tau \leq \alpha^*$ is identical. By a direct computation we have that 
\begin{align*}
& \mme[\ell(\beta,\tau)] - \mme[\ell(\beta,\alpha^*)]\\
& = \mme[\alpha(\beta -\tau)\bone_{\beta \geq \tau}] + \mme[(1-\alpha)( \tau - \beta)\bone_{\beta <\tau}] -  \mme[\alpha(\beta - \alpha^*)\bone_{\beta \geq \alpha^*}] - \mme[(1-\alpha)( \alpha^* - \beta)\bone_{\beta < \alpha^*}] \\
& = -\mme[\alpha (\tau - \alpha^*) \bone_{\beta \geq \tau}] + \mme[(1-\alpha) (\tau - \alpha^*)\bone_{\beta < \alpha^*}] + \mme[((1-\alpha)( \tau - \beta) - \alpha(\beta - \alpha^*))\bone_{ \alpha^* \leq \beta <\tau}]\\
& = -\mme[\alpha  (\tau - \alpha^*) \bone_{\beta \geq \alpha^*}] + \mme[(1-\alpha) (\tau - \alpha^*)\bone_{\beta < \alpha^*}] + \mme[\alpha (\tau - \alpha^*) \bone_{\alpha^* \leq \beta < \tau}]\\
& \ \ \ \ \  + \mme[(1-\alpha) (\tau - \alpha^*) \bone_{\alpha^* \leq \beta < \tau}] - \mme[(\beta - \alpha^*)\bone_{\alpha^* \leq \beta < \tau}]\\
& = -\alpha(1-\alpha) (\tau - \alpha^*) + \alpha(1-\alpha) (\tau - \alpha^*) + \mme[ (\tau - \alpha^*) \bone_{\alpha^* \leq \beta < \tau}] - \mme[(\beta - \alpha^*)\bone_{\alpha^* \leq \beta < \tau}]\\
& = \mme[(\tau- \beta) \bone_{\alpha^* \leq \beta < \tau}].
\end{align*}
This proves the first part of Proposition \ref{prop:loss_connection}. For the second part note that if $\beta$ has a density on $[0,1]$ that is lower bounded by $p$, then  
\[
 \mme[(\tau - \beta) \bone_{\alpha^* \leq \beta < \tau}], \  \mme[(\beta - \tau)  \bone_{\tau \leq \beta < \alpha^*}] \geq \int_{0}^{ |\tau - \alpha^*|}xpdx =p\frac{(\tau - \alpha^*)^2}{2}.
\]
\end{proof}

\subsection{Proofs for Section \ref{sec:coverage_bounds}}

\subsubsection{Proof of Theorem \ref{thm:long_term_coverage}}

\begin{proof}
Let
\[
\tilde{\alpha}_t := \sum_i \frac{p_t^i \alpha_t^i}{\gamma_i}
\]
and observe that
\begin{align*}
\tilde{\alpha}_t & = \sum_i \frac{p_t^i (\alpha_{t+1}^i - \gamma_i (\alpha - \text{err}_t^i))}{\gamma_i} =  \sum_i \frac{p_t^i \alpha_{t+1}^i}{\gamma_i}  + \sum_i p_t^i  (\text{err}_t^i - \alpha)\\
& = \tilde{\alpha}_{t+1} + \sum_i \frac{(p_t^i - p_{t+1}^i) \alpha_{t+1}^i}{\gamma_i} + \sum_i p_t^i (\text{err}_t^i - \alpha) .
\end{align*}
Thus,
\begin{equation}\label{eq:err_t_rewrite}
\mme[\text{err}_t] - \alpha = \tilde{\alpha}_t - \tilde{\alpha}_{t+1} + \sum_i \frac{(p_{t+1}^i - p_{t}^i) \alpha_{t+1}^i}{\gamma_i}.
\end{equation}
Now, for ease of notation let $W_t := \sum_i w_t^i$ and $\tilde{p}^i_{t+1} := \frac{p_t^i\exp( - \eta_t \ell(\beta_t,\alpha_t^i))}{\sum_{i'} p_t^{i'}\exp( - \eta_t \ell(\beta_t,\alpha_t^{i'}))  }$. Recall that by definition,
\[
p_{t+1}^i  =  \frac{w_{t+1}^i}{\sum_{i'} w_{t+1}^{i'}} = (1-\sigma_t) \tilde{p}^i_{t+1} + \frac{\sigma_t}{k}.
\]
Now, by a direct computation
\begin{align*}
 \tilde{p}^i_{t+1} - p^i_t & = \frac{p_t^i\exp( - \eta_t \ell(\beta_t,\alpha_t^i))}{\sum_{i'} p_t^{i'}\exp( - \eta_t \ell(\beta_t,\alpha_t^{i'}))  }  - p_t^i\\
 & = p_t^i \frac{\exp( - \eta_t \ell(\beta_t,\alpha_t^i)) -    \sum_{i'} p_t^{i'}\exp( - \eta_t \ell(\beta_t,\alpha_t^{i'}))}{  \sum_{i'} p_t^{i'}\exp( - \eta_t \ell(\beta_t,\alpha_t^{i'}))} \\
 & = p_t^i  \frac{    \sum_{i'} p_t^{i'}(\exp( - \eta_t \ell(\beta_t,\alpha_t^{i}))  - \exp( - \eta_t \ell(\beta_t,\alpha_t^{i'})))}{  \sum_{i'} p_t^{i'}\exp( - \eta_t \ell(\beta_t,\alpha_t^{i'}))} \\
  & = p_t^i \frac{    \sum_{i'} p_t^{i'} \exp(  -\eta_t \ell(\beta_t,\alpha_t^{i'})) (\exp(  \eta_t \ell(\beta_t,\alpha_t^{i'}) - \eta_t \ell(\beta_t,\alpha_t^i))  -1)}{  \sum_{i'} p_t^{i'}\exp( - \eta_t \ell(\beta_t,\alpha_t^{i'}))} \\
  & =  p_t^i  \sum_{i' } \tilde{p}_{t+1}^{i'} (\exp(  \eta_t \ell(\beta_t,\alpha_t^{i'}) - \eta_t \ell(\beta_t,\alpha_t^{i}) )  -1).
\end{align*}
By Lemma 4.1 of \cite{Gibbs2021} we know that $\alpha_t^i \in [-\gamma_i,1+\gamma_i]$ and thus that $ |\ell(\beta_t,\alpha_t^{i'}) -  \ell(\beta_t,\alpha_t^{i})| \leq \max\{\alpha,1-\alpha\}|\alpha_t^{i'} - \alpha_t^{i}| \leq 1+2\gamma_{\text{max}}$. Hence, by the mean value theorem
\[
| \exp(  \eta_t \ell(\beta_t,\alpha_t^{i'}) - \eta_t \ell(\beta_t,\alpha_t^{i}) )  -1| \leq \eta_t(1+2\gamma_{\text{max}})\exp(\eta_t(1+2\gamma_{\text{max}})),
\]
and thus also
\[
| \tilde{p}^i_{t+1} - p^i_t| \leq p_t^i\eta_t(1+2\gamma_{\text{max}})\exp(\eta_t(1+2\gamma_{\text{max}})).
\]
Applying  Lemma 4.1 of \cite{Gibbs2021} again we conclude that 
\begin{align*}
\left|\sum_i \frac{(p_{t+1}^i - p_{t}^i) \alpha_{t+1}^i}{\gamma_i} \right| & \leq (1-\sigma_t)\sum_i \left|\frac{(\tilde{p}_{t+1}^i - p_{t}^i) \alpha_{t+1}^i}{\gamma_i}\right| + \sigma_t \sum_i\left| \frac{(1/k - p_{t}^i) \alpha_{t+1}^i}{\gamma_i} \right|\\
& \leq \frac{\eta_t(1+2\gamma_{\text{max}})^2}{\gamma_{\text{min}}}\exp(\eta_t(1+2\gamma_{\text{max}})) + 2\sigma_t \frac{1+\gamma_{\text{max}}}{\gamma_{\text{min}}}.
\end{align*}
So, taking a sum over $t$ in equation \ref{eq:err_t_rewrite} gives the inequality
\[
\left| \frac{1}{T} \sum_{t=1}^T \mme[\text{err}_t] - \alpha \right| \leq  \frac{|\tilde{\alpha}_{1} - \tilde{\alpha}_{T+1}|}{T} +  \frac{(1+2\gamma_{\text{max}})^2}{\gamma_{\text{min}}}   \frac{1}{T} \sum_{t=1}^T \eta_t e^{\eta_t(1+2\gamma_{\text{max}})}+ 2  \frac{1+\gamma_{\text{max}}}{\gamma_{\text{min}}}\frac{1}{T} \sum_{t=1}^T \sigma_t. 
\]
Applying Lemma 4.1 of \cite{Gibbs2021} one final time gives $\gamma_{\text{min}}\tilde{\alpha}_{t} \in [-\gamma_{\text{max}},1+\gamma_{\text{max}}]$ and thus $|\tilde{\alpha}_{1} - \tilde{\alpha}_{T+1}| \leq (1+2\gamma_{\text{max}})/\gamma_{\text{min}}$. Plugging this into the previous expression gives the desired upperbound. 

Finally, if $\eta_t, \sigma_t \to 0$ then taking limits gives the desired equalities
\[
\lim_{T \to \infty} \left| \frac{1}{T} \sum_t \text{err}_t - \alpha \right| = \lim_{T \to \infty}  \left| \frac{1}{T} \sum_t \mme[\text{err}_t] - \alpha \right| = 0,
\]
where the first equality follows from the law of large numbers.

\end{proof}

\subsection{Additional figures}

We begin by giving three figures showing the results for Section \ref{sec:real_data} in the case where we use a variable value for $\eta$, given by $\eta = \eta_t = \sqrt{\frac{\log(500 \cdot k) + 2}{\sum_{s=t-501}^{t} \mme[\ell(\beta_t,\alpha_t )^2]}}$. We see that the results are nearly exactly identical to those obtained with the fixed heurisitic choice of $\eta$.

\begin{figure}[H]
\begin{centering}
\includegraphics[scale=0.16]{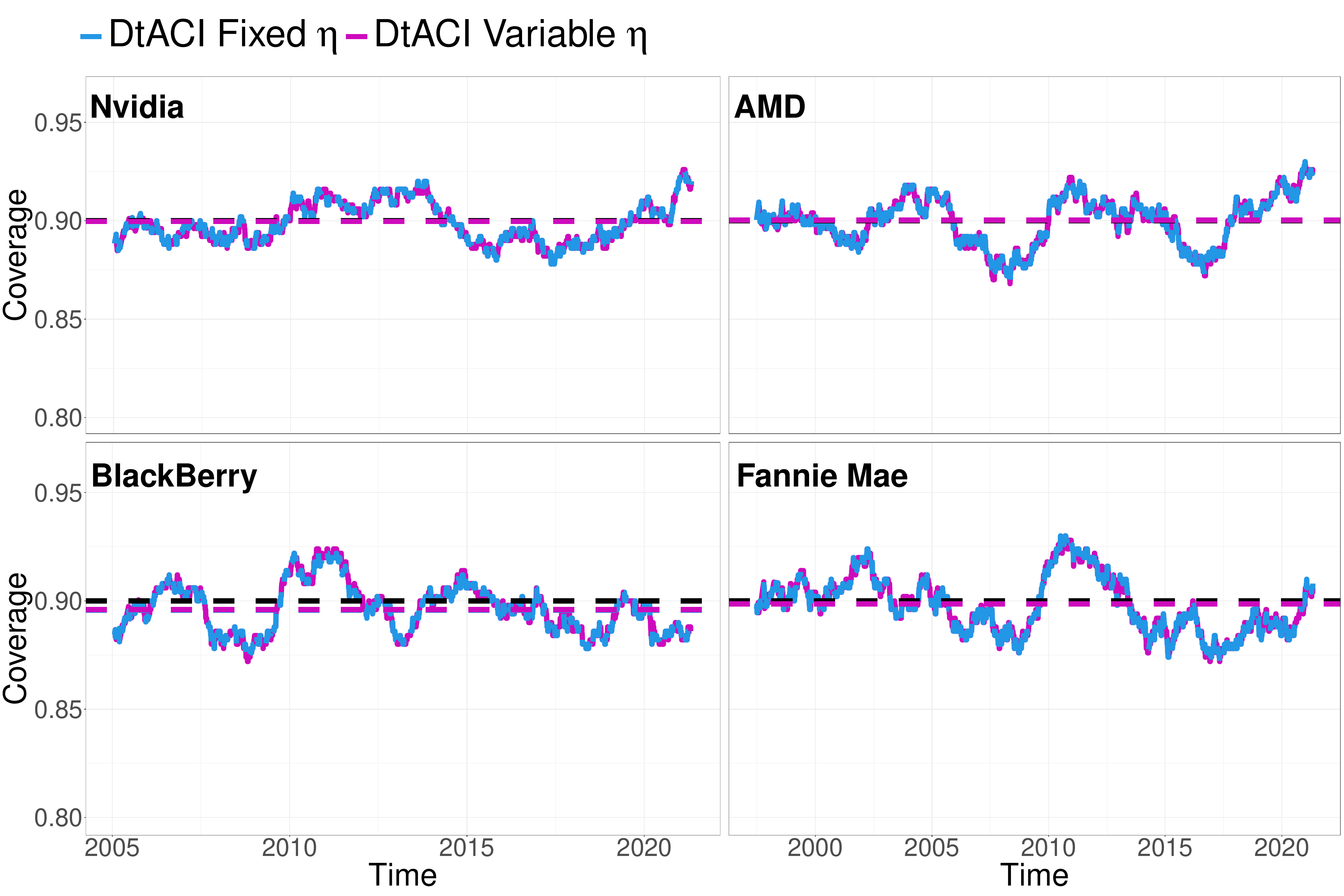}
\caption{Estimation of stock market volatility using conformity score $S_t(v) = |v - (\hat{\sigma}^t_t)^2|/(\hat{\sigma}^t_t)^2$ and either $\eta$ fixed at 2.8 (blue) or a variable value (purple) of $\eta_t = \sqrt{\frac{\log(500 \cdot k) + 2}{\sum_{s=t-501}^{t} \mme[\ell(\beta_t,\alpha_t )^2]}}$. Solid lines show the local coverage level for $\bar{\alpha}_t$, while dashed lines indicate the global coverage frequency over all time steps. The black dashed lines indicates the target level of $1-\alpha = 0.9$.}
\label{fig:volatility_cov_dynamic_eta_normalized}
\end{centering}
\end{figure} 

\begin{figure}[H]
\begin{centering}
\includegraphics[scale=0.16]{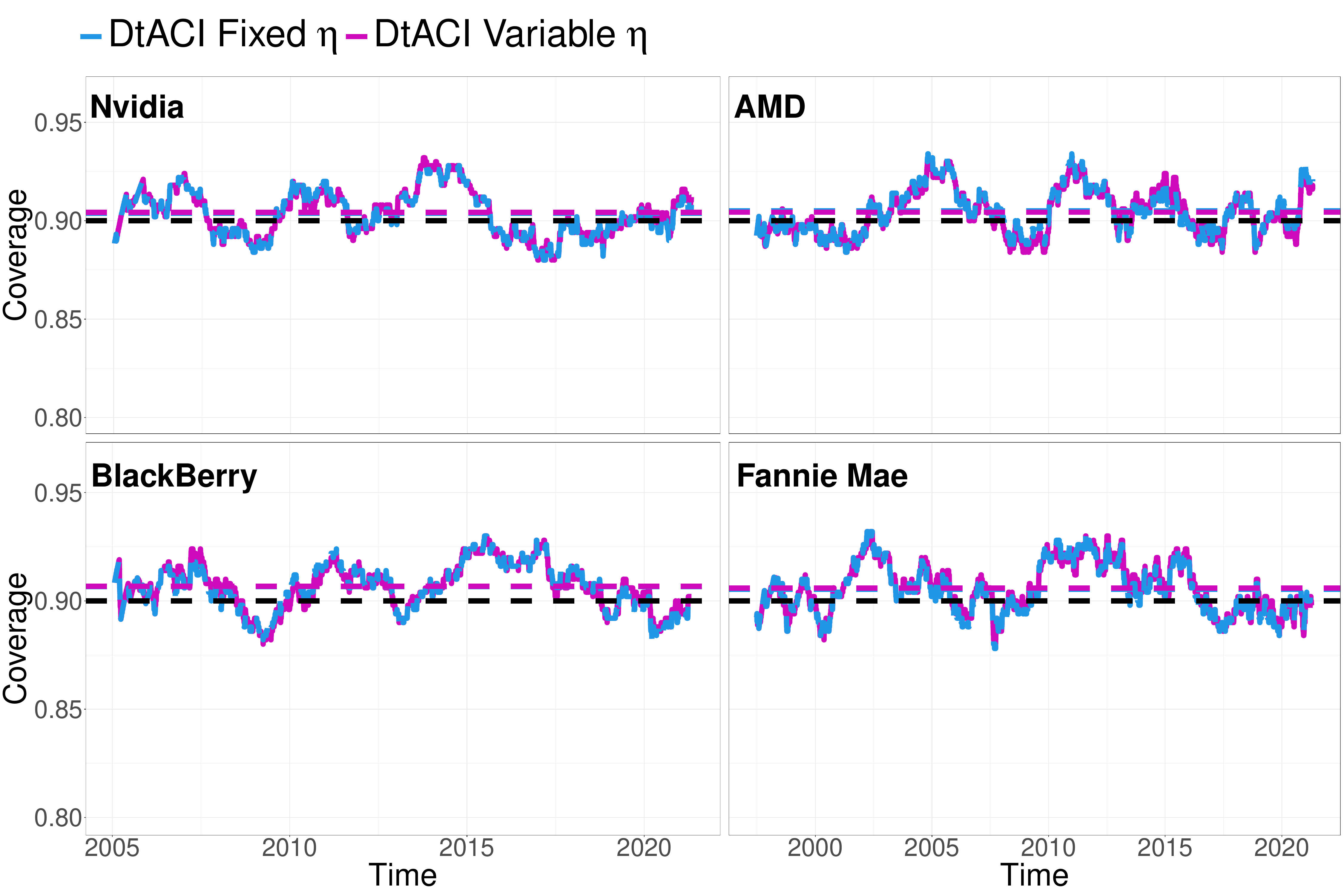}
\caption{Estimation of stock market volatility using conformity score $S_t(v) = |v - (\hat{\sigma}^t_t)^2|$ and either $\eta$ fixed at 2.8 (blue) or a variable value (purple) of $\eta_t = \sqrt{\frac{\log(500 \cdot k) + 2}{\sum_{s=t-501}^{t} \mme[\ell(\beta_t,\alpha_t )^2]}}$. Solid lines show the local coverage level for $\bar{\alpha}_t$, while dashed lines indicate the global coverage frequency over all time steps. The black dashed lines indicates the target level of $1-\alpha = 0.9$.}
\label{fig:volatility_cov_dynamic_eta_unnormalized}
\end{centering}
\end{figure} 

\begin{figure}[H]
\begin{centering}
\includegraphics[scale=0.16]{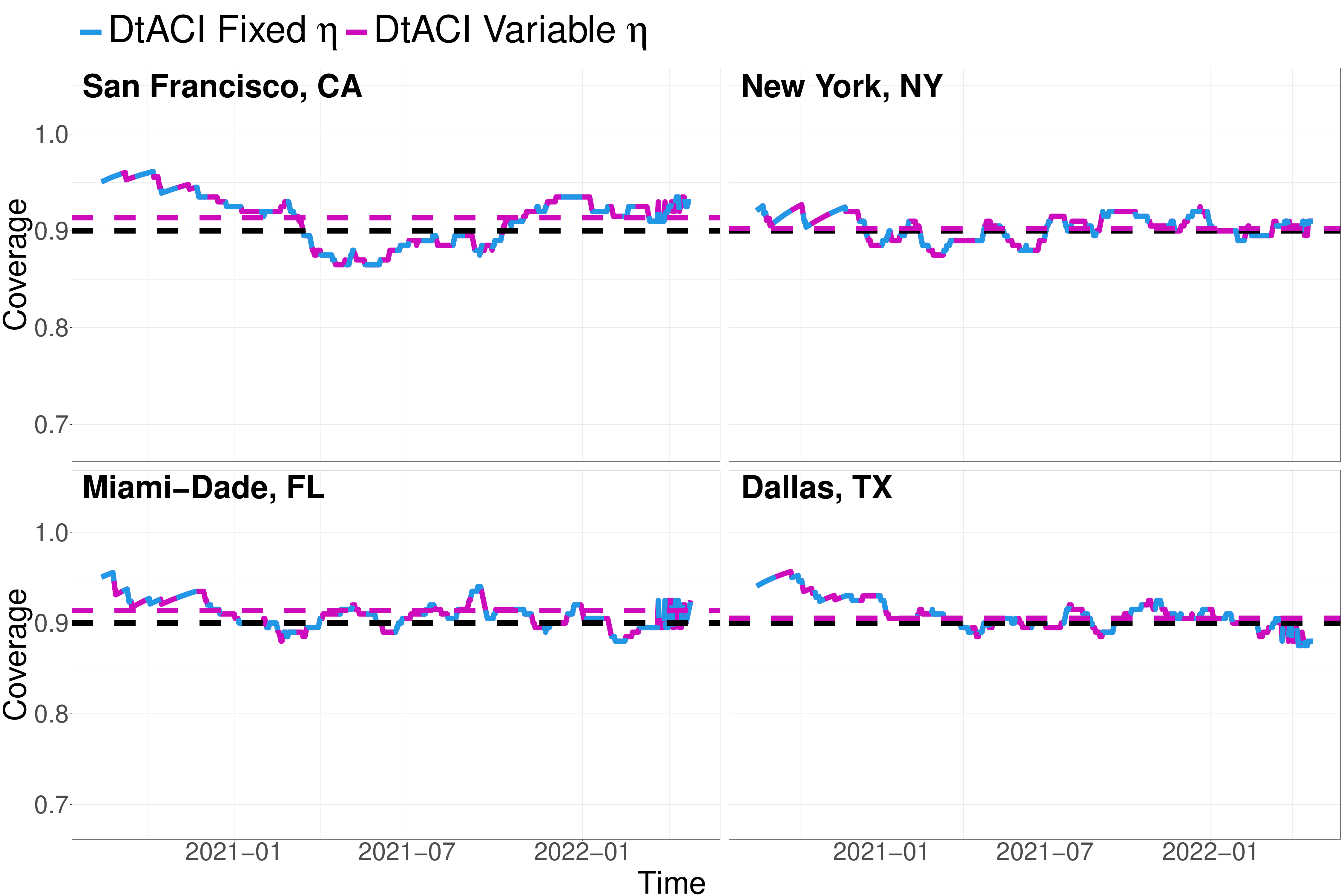}
\caption{Coverage results for the prediction of county-level COVID-19 case counts using either $\eta$ fixed at 2.8 (blue) or a variable value (purple) of $\eta_t = \sqrt{\frac{\log(500 \cdot k) + 2}{\sum_{s=t-500}^{t-1} \mme[\ell(\beta_t,\alpha_t )^2]}}$. Solid lines show the local coverage level for $\bar{\alpha}_t$, while dashed lines indicate the global coverage frequency over all time steps. The black dashed lines indicates the target level of $1-\alpha = 0.9$.}
\label{fig:covid_cov_dynamic_eta}
\end{centering}
\end{figure} 

Our next figure shows Q-Q plots comparing the quantiles of the local coverage gaps realized by DtACI against those for an i.i.d. Bernoulli(0.9) sequence for the volatility dataset of Section \ref{sec:stocks} and the unnormalized conformity score $S_t(v) = |v - (\hat{\sigma}^t_t)^2|$.

\begin{figure}[H]
\begin{centering}
\includegraphics[scale=0.16]{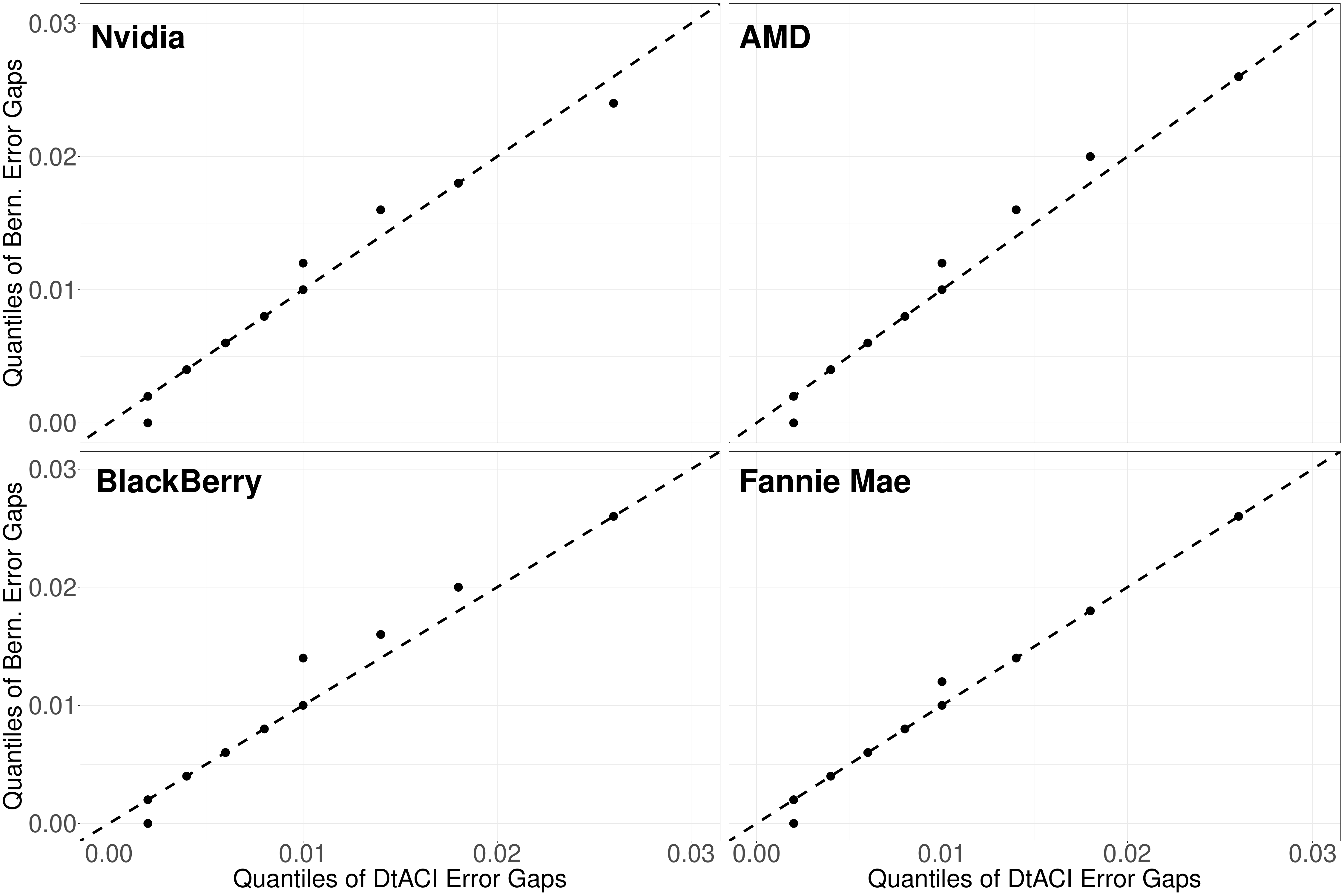}
\caption{Q-Q plots comparing the distribution of local coverage gaps obtained by an i.i.d. Bernoulli($\alpha$) against those realized by DtACI. Results for DtACI were generated with conformity score $S_t(v) = |v - (\hat{\sigma}^t_t)^2|$. The dashed line indicates the ideal situation of exact equality.}
\label{fig:bern_comp_bad}
\end{centering}
\end{figure}

The next figure shows the empirical conditional coverage (\ref{eq:cond_coverage}) for the estimation of stock market volatility with the unnormalized conformity score. As for the normalized conformity score we observe conditional coverages close to the target level across all values of $\bar{\alpha}_t$. 

\begin{figure}[H]
\begin{centering}
\includegraphics[scale=0.16]{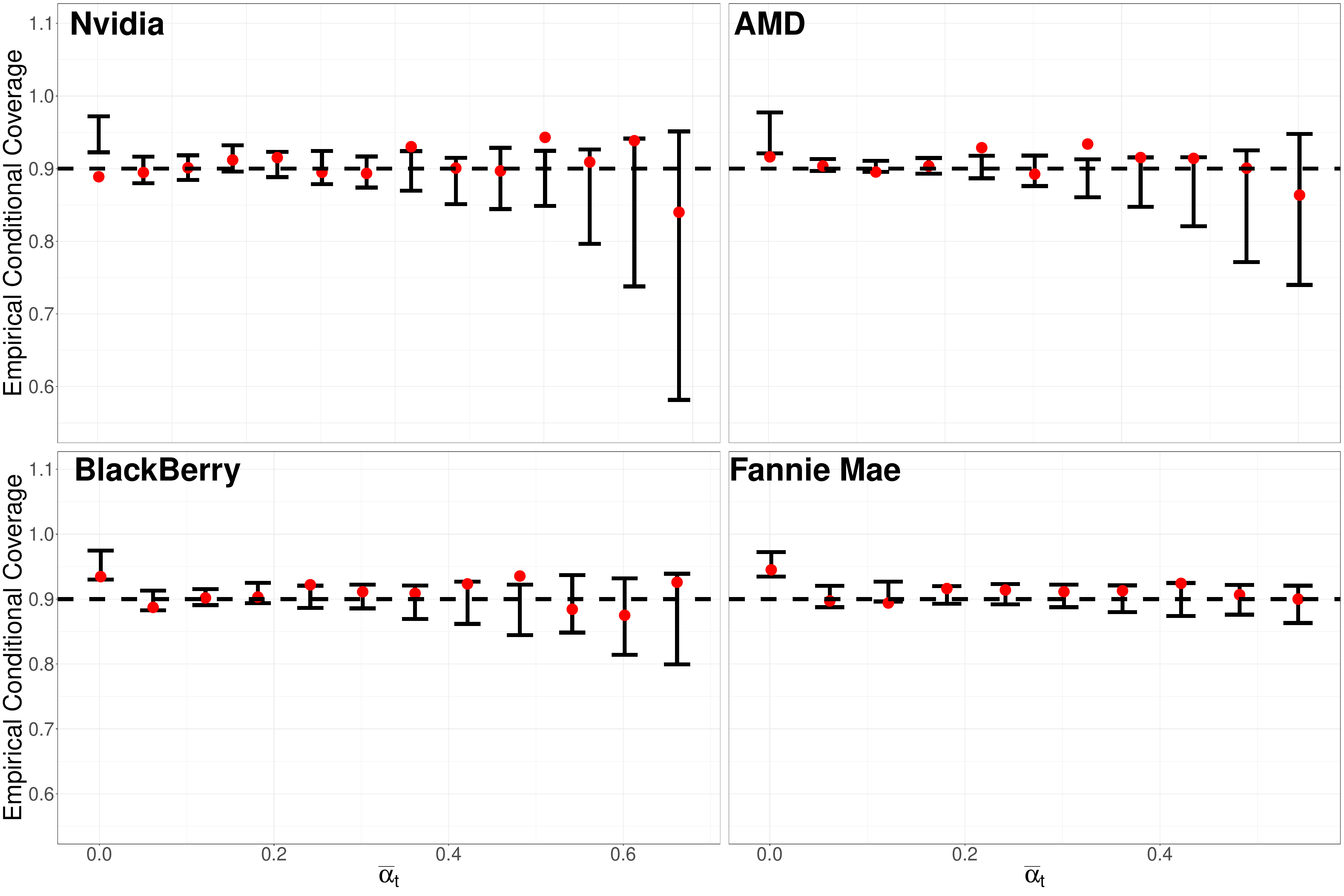}
\caption{Empirical conditional coverage of $\bar{\alpha}_t$ for the estimation of stock market volatility with conformity score $S_t(v) = |v - (\hat{\sigma}^t_t)^2|$. Red points show the empirical conditional coverage given $\bar{\alpha}_t \in B_i$ with error bars indicating the corresponding 0.05 and 0.95 quantiles across 100 block bootstrap resamples of the data $\{(X_i,Y_i)\}$ with block-size set to 100. Black dashed lines shows the target level of $1-\alpha = 0.9$. }
\label{fig:cond_alphat_cov_unnormalized}
\end{centering}
\end{figure} 

Finally, Figures \ref{fig:stock_prices} and \ref{fig:covid_case_counts} show the stock prices and the COVID-19 case counts for the datasets considered in Section \ref{sec:real_data}.

\begin{figure}[H]
  \centering
  \includegraphics[scale=0.16]{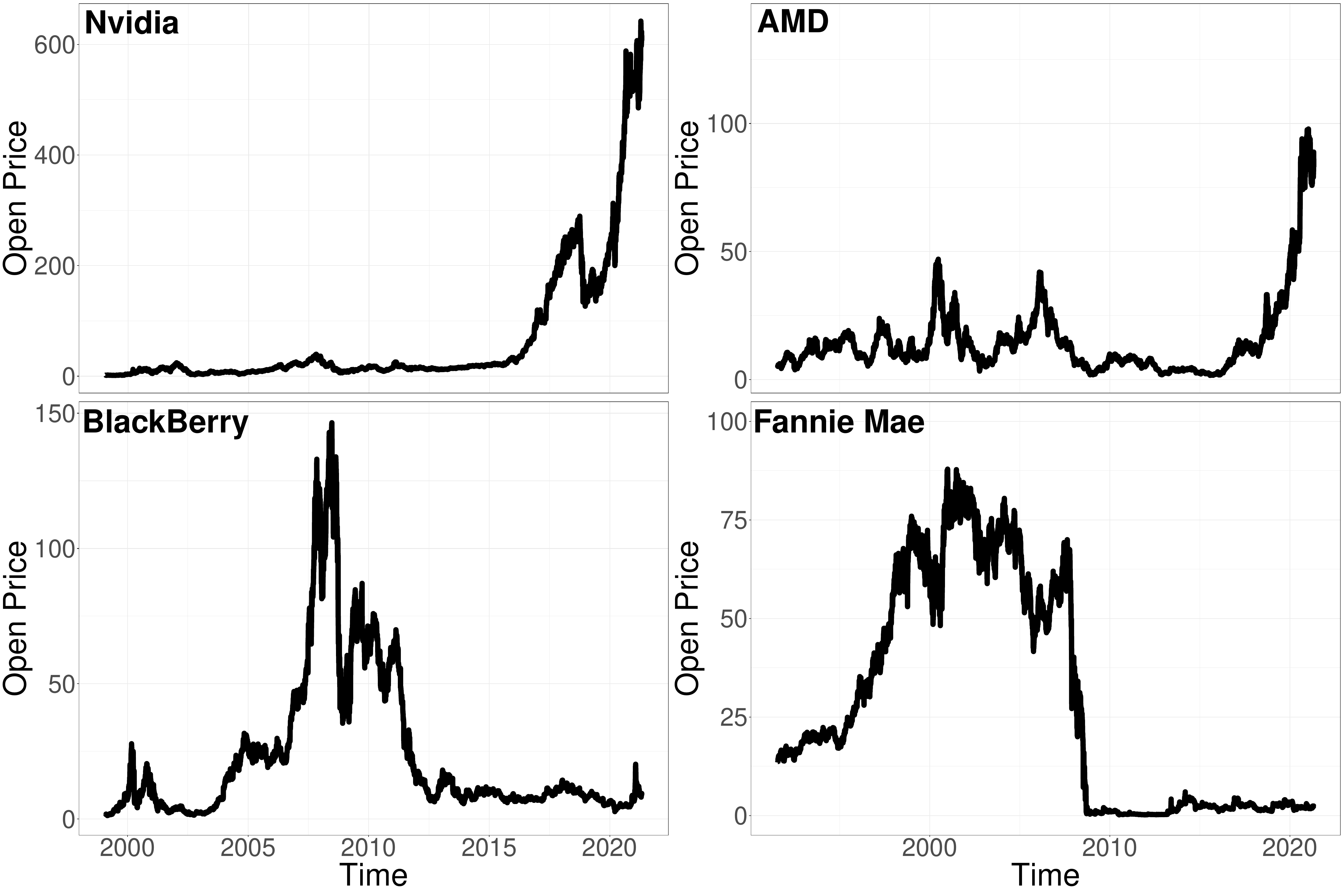}
  \caption{Daily open prices for the four stocks considered in Section \ref{sec:stocks}.}
  \label{fig:stock_prices}
\end{figure}

\begin{figure}[H]
\begin{centering}
\includegraphics[scale=0.16]{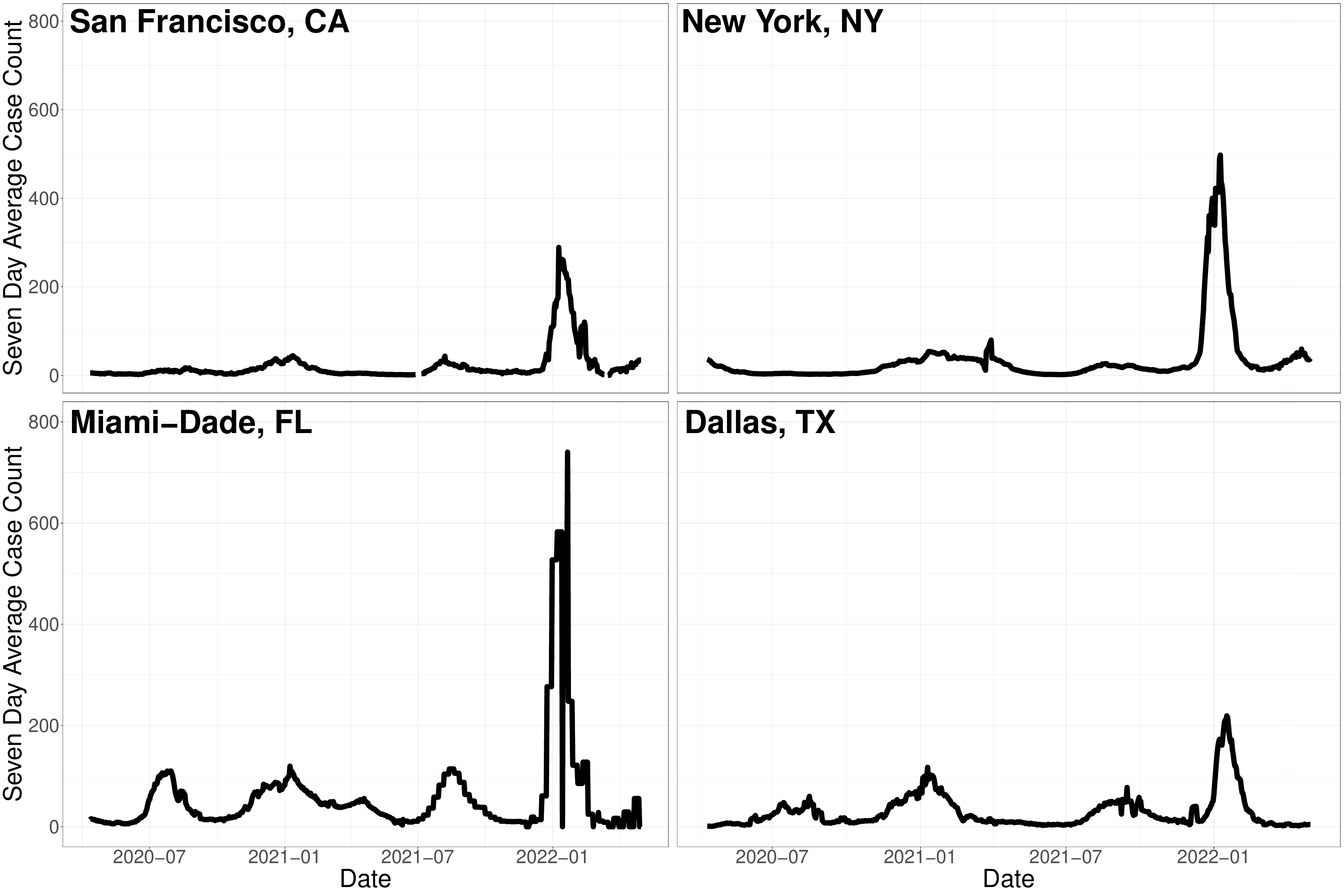}
\caption{Moving seven day averages of the number of new confirmed COVID-19 cases per 100,000 people in the four counties considered in Section \ref{sec:covid}.  }
\label{fig:covid_case_counts}
\end{centering}
\end{figure} 

\end{document}